\numberwithin{equation}{section}
\DeclareMathOperator{\sech}{sech}
\numberwithin{equation}{section}
\newtheorem{Thm}{Theorem}
\def \wh#1{\widehat{#1}}
\def \wt#1{\widetilde{#1}}
\def \wb#1{\overline{#1}}
\def \dwh{\underaccent{{\cc@style\widehat{\mskip10mu}}}}
\def \dt#1{\underaccent{\tilde}{#1}}
\def \cd#1{\accentset{\circ}{#1}}
\newcommand{\nn}{\nonumber}
\newcommand{\Ga}{\Gamma}
\newcommand{\Ps}{\Psi}
\newcommand{\st}{\hbox{\tiny\it{T}}}
\def \cd#1{\accentset{\circ}{#1}}
\def \d#1{\accentset{\bullet}{#1}}
\begin{document}
	
\title{Discrete nonlinear Schr\"{o}dinger type equations: Solutions and continuum limits}	
\author{Song-lin Zhao$^{*}$,~~Xiao-hui Feng,~~Wei Feng\\
\\\lowercase{\scshape{Department of Applied Mathematics, Zhejiang University of Technology, Hangzhou 310023, P.R. China}}}
\email{*Corresponding Author: songlinzhao@zjut.edu.cn}

\begin{abstract}

As local and nonlocal reductions of a discrete second-order Ablowitz-Kaup-Newell-Segur equation,
two discrete nonlinear Schr\"{o}dinger type equations are considered. Through the bilinearization
reduction method, we construct double Casoratian solutions of the reduced discrete nonlinear
Schr\"{o}dinger type equations, including soliton solutions and Jordan-block solutions.
Dynamics of the obtained one-soliton and two-soliton solutions are analyzed and illustrated. Moreover,
both semi-continuous limit and full continuous limit, are
applied to obtain solutions of the local and nonlocal semi-discrete nonlinear Schr\"{o}dinger type equations, as
well as the local and nonlocal continuous nonlinear Schr\"{o}dinger type equations.

\end{abstract}

\keywords{Local and nonlocal reductions, discrete nonlinear Schr\"{o}dinger type equation,
solutions, dynamics, continuum limits.}

\maketitle

\section{Introduction}
\label{sec-1}

The nonlinear Schr\"{o}dinger (NLS) type equations
are classic integrable models emerging from a wide variety of fields and playing a significant
role in many fields of science such as fluids, nonlinear
optics, the theory of deep water waves, plasma physics, nonlinear transmission
networks, nuclear physics, Bose-Einstein condensates, condensed matter physics, and so on (see \cite{Agr,Ben,Has,Mio,Kan,KIV,Fib,Ken}).
The focusing NLS equation, being a standard NLS type equation,
\begin{align}
\label{NLS}
iu_t+u_{xx}+|u|^2u=0
\end{align}
with dependent variable $u:=u(x,t)$ admits carrier waves
and solitons, and whose breathers and other solutions (e.g. rogue waves) are the modulations
of carrier waves, where $i$ is the imaginary unit and $|u|$ stands for the modulus of $u$. In 2013,
Ablowitz and Musslimani proposed a nonlocal $\mathcal {PT}$-symmetric NLS equation \cite{AM-2013-PRL}
\begin{align}
\label{nNLS}
iu_t+u_{xx}+uu^*(-x,-t)u=0,
\end{align}
bringing a fresh look in the research of integrable systems. Analogous to \eqref{NLS},
the nonlocal NLS equation \eqref{nNLS} also exhibits lots of interesting properties. For instance,
this equation can be also solved by the inverse scattering transform, have multi-soliton solutions,
admit linear Lax pair formulation and possess an infinite number of conservation laws \cite{Ab-nNLS-IST}.
In recent years, many approaches, including
inverse scattering transformation, Darboux transformation, Hirota's bilinear method, the Riemann-Hilbert approach,
Cauchy matrix approach and the assumption of stationary solution, were applied to construct various solutions to the equation
\eqref{nNLS}, such as solitons, rational solutions, Jacobi elliptic-function and hyperbolic-function solutions, and so on
\cite{Val,Sin,Wen,HL-EPJP,CZ-AML,FZ-ROMP,XCLM,AbLM}. Not only the exact solutions,
great other progress on the equation \eqref{nNLS} has also been achieved. In \cite{Ger} and \cite{Ryb-LTA}, complete integrability
and long-time asymptotics behavior were investigated, respectively. Subsequently, the application of the nonlocal
equation \eqref{nNLS} on multi-place physics was discussed \cite{Lou-CTP}. In a recent paper \cite{AbM-shifted}, Ablowitz and
Musslimani proposed several novel integrable nonlocal reductions for
the AKNS scattering problem and generalized the equation \eqref{nNLS} into a shifted form.

The discrete integrable systems always serve as master models in the theory of
integrable systems, due to their rich algebraic structure and connection with modern mathematics and physics, such as
difference calculus, elliptic functions, numerical calculation, statistical physics, subatomic physics, and so on,
revealing on frequent occasions unexpected links between them (see \cite{HJN-book-2016} and the references therein).
The discrete integrable systems pioneered at the end of the 1970's by Ablowitz and Ladik \cite{AL,AL-1}, and Hirota \cite{Hir},
are consolidated throughout the 1980's by the work of many others \cite{DJM,NQC}, have got great progress
\cite{NW,ABS,Nijhoff-CM-2009,HZ-JPA,IST,KP,BSQ,ZZN-BSQ,ZZ-SAM-2013}.
Recently, a great deal of interest has been directed at the creation and study of discrete models of the NLS type equations.
As the associated nonlinear superposition principle of auto-B\"{a}cklund transformation for NLS hierarchy,
the first system of discretization of NLS equation appeared in the paper \cite{Kono} by Konopelchenko
put forward a discretization theory of NLS type equations. This discrete model
was rediscovered by Hattori and Willox through symmetry-constraint method from Hirota-Miwa equation \cite{WH}, which
possesses an infinite set of polynomial higher symmetries and rational higher symmetries
in each lattice direction \cite{Tsuch}. Using the discretized bilinear identity,
Date, Jimbo and Miwa \cite{DJM-JPSJ} proposed an integrable ``fake''
discrete NLS equation since the resulting equation does not admit the complex
conjugation reduction between the dependent functions.
In \cite{QNCL}, a discrete NLS type Nijhoff-Quispel-Caple equation as well as its semi-discrete
model was established based on the direct linearization framework \cite{FA-DLM}.
Recently, a discrete NLS equation and a modified discrete NLS equation \cite{ZFJ-CNSNS}
were constructed with the help of the Cauchy matrix approach \cite{Nijhoff-CM-2009,ZZ-SAM-2013}.
where some exact solutions including soliton solutions, Jordan-block solutions and mixed solutions for the resulting discrete equations were generated
by solving the determining equation set.

Apart from the above-mentioned discrete NLS type models, there is also an interesting discrete NLS equation
\begin{subequations}
\label{dNLS}
\begin{align}
& i(\wh{u}-u)-\delta(\wh{u}+u)+\delta(1+|u|^2)(\wt{u}+\wh{\dt{u}})w=0,\\
& (1+|\wh{u}|^2)\wt{w}=(1+|u|^2)w,
\end{align}
\end{subequations}
introduced by Horiuchi {\it et al.} \cite{HOHT} in terms of the framework of
the integrable discretization based on bilinear forms, which was reduced to
the semi-discrete NLS equation \cite{AL-1975} in the continuum limit of $\delta$.
In \eqref{dNLS}, the dependent variable $u:=u_{n,m}$ is a function of discrete
variables $\{n, m\}$ associated and $\delta$ is a continuous lattice parameter. The
notations $\wt{\phantom{a}}$ and $\wh{\phantom{a}}$ represent the discrete forward shift operators
and notation $\dt{\phantom{a}}$ means the backward shift operator, e.g.
$\wt{u}=u_{n+1,m}$, $\wh{u}=u_{n,m+1}$ and $\dt{\wh{u}}=u_{n-1,m+1}$.

Although bright multi-soliton solutions in Casoratian form for the discrete NLS equation \eqref{dNLS} have been obtained by
Tsujimoto in his Ph.D. thesis \cite{Tsujimoto}, we will pay attention to the explicit one-, two-soliton solutions, as well as their
dynamical behaviours. In addition, we are also interested in the Jordan-block solutions for this discrete equation.
In order to investigate double Casoratian solutions of the discrete NLS equation \eqref{dNLS}, we introduce a coupled discrete system
\begin{subequations}
\label{dAKNS2}
\begin{align}
& i(\wh{u}-u)-\delta(\wh{u}+u)+\delta(1+uv)(\wt{u}+\wh{\dt{u}})w=0, \\
& i(\wh{v}-v)+\delta(\wh{v}+v)-\delta(1+uv)(\wt{v}+\wh{\dt{v}})w=0, \\
& (1+\wh{u}\wh{v})\wt{w}=(1+uv)w,
\end{align}
\end{subequations}
where $w$ is an auxiliary variable expressed as
\begin{align}
\label{dAKNS-c-rw}
w_{n,m}=\prod_{j=n_0}^{n-1}\frac{1+u_{j,m}v_{j,m}}{1+\wh u_{j,m}\wh v_{j,m}}, \quad n_0\in\mathbb{Z}.
\end{align}
For the sake of simplicity, we name \eqref{dAKNS2} by dAKNS(2) in the rest part of the present paper.
The discrete NLS equation \eqref{dNLS} can be deduced from system \eqref{dAKNS2} by imposing
local constraint $(v=u^*, w=w^*)$, where and whereafter the asterisk denotes complex conjugation.
The discrete system \eqref{dAKNS2} admit a nonlocal reduction. In fact, through the
constraint $(v=u_{-n,-m},~w=\wh{w}_{-n,-m})$, we arrive at a discrete equation
\begin{subequations}
\label{ndNLS}
\begin{align}
& i(\wh{u}-u)-\delta(\wh{u}+u)+\delta(1+u u_{-n,-m})(\wt{u}+\wh{\dt{u}})w=0,\\
& (1+\wh{u}\wh{u}_{-n,-m})\wt{w}=(1+u u_{-n,-m})w,
\end{align}
\end{subequations}
which is referred to as a nonlocal discrete NLS equation
because of the simultaneous involvement of $u$ and $u_{-n,-m}$. In what follows,
we call \eqref{dNLS} and \eqref{ndNLS}, dNLS equation and ndNLS equation for short, respectively.

Because, as stressed above, the discrete integrable systems have played an increasingly
prominent role in both mathematics and physics during past decades, the nonlocal
discrete integrable systems offer great potential for building media of various areas of the integrable systems.
The first remarkable work is the proposal of nonlocal Adler-Bobenko-Suris
lattice equations \cite{ZKZ-CAC}. With regards to the exact solutions to nonlocal discrete integrable systems,
one can refer to references \cite{XFZ-TMP,XZS-SAPM} and a recent paper \cite{ZXS-MMAS}
for Cauchy matrix and bilinearization reduction scheme
to the nonlocal discrete sine-Gordon equation and nonlocal discrete modified Korteweg-de Vries equation.
The bilinearization reduction scheme, which ultimately aim to construct exact solutions in
double Wronskian/Casoratian form for the reduced equations, is naturally related to the solutions
for the double Wronskian/Casoratian solutions of the before-reduction system.
This method involves, first taking appropriate reductions to get the local/nonlocal integrable systems,
and second solving the matrix equation algebraically to derive the exact solutions.
Compared with the Cauchy matrix reduction approach \cite{XFZ-TMP},
there is a great advantage of the bilinearization reduction method. The latter method allows one
to construct exact solutions of the real reduced equations. While in the Cauchy matrix reduction
scheme, one cannot achieve that.

In this paper, we focus on the double Casoratian solutions of the dNLS equation \eqref{dNLS} and ndNLS equation
\eqref{ndNLS}, as well as the continuum limits.
The paper is arranged as follows: In Section 2, double Casoratian solutions for system \eqref{dAKNS2} are discussed.
In Sections 3 and 4, we would like to construct various of exact solutions for the dNLS equation \eqref{dNLS} and
ndNLS equation \eqref{ndNLS} by using the bilinearization reduction scheme, respectively.
Some exact solutions, including multi-solitons and Jordan-block solutions, as well as
the dynamics are investigated. Moreover, continuum limits, including semi-continuous limit and continuous limit,
are introduced to discuss the semi-discrete NLS-type equations, as well as the corresponding exact solutions.
Section 5 is for conclusions and some remarks.

\section{Double Casoratian solutions to the dAKNS(2) equation \eqref{dAKNS2}}

Our aim in this section is to construct double Casoratian solutions to the dAKNS(2) equation \eqref{dAKNS2}.
We first introduce some notations on the double Casoratian. Let $\Phi$ and $\Psi$ be $(N+M+2)$-th order column vectors
\[\Phi=(\Phi_1, \Phi_2,\cdots,\Phi_{N+M+2})^{\st}, \quad \Psi=(\Psi_1, \Psi_2,\cdots,\Psi_{N+M+2})^{\st},\]
where entries $\Phi_j$ and $\Psi_j$ are functions of $(n,m)$. A standard double Casoratian is a determinant
of the following form:
\begin{align}
\mbox{C}^{(N,M)}(\Phi, \Psi)=|\Phi, E^2\Phi, \ldots, E^{2N}\Phi; \Psi, E^2\Psi, \ldots, E^{2M}\Psi|,
\end{align}
where $E$ is the shift operator defined as $E^jf_{n,m}=f_{n+j,m}$. We introduce short-hand \cite{FN-1983}
\begin{align}
\mbox{C}^{(N,M)}(\Phi, \Psi)=|\wh{\Phi^{(N)}};\wh{\Psi^{(M)}}|,
\end{align}
where by $\wh{\Phi^{(N)}}$ we mean consecutive columns $(\Phi, E^2\Phi, \ldots, E^{2N}\Phi)$.

Inserting transformation of dependent variables
\begin{align}
\label{dAKNS2-uvw-tran}
u=g/f,\quad v=h/f,\quad w=\wh{\dt{f}}f/(\wh{f}\dt{f}),
\end{align}
in system \eqref{dAKNS2}, after straightforward manipulations one arrives at the following bilinear form of $f$, $g$ and $h$,
\begin{subequations}
\label{dAKNS2-bili}
\begin{align}
& i(\wh{g}f-g\wh{f})-\delta(\wh{g}f+g\wh{f})+\delta(\wt{g}\wh{\dt{f}}+\wh{\dt{g}}\wt{f})=0,\\
& i(\wh{h}f-h\wh{f})+\delta(\wh{h}f+h\wh{f})-\delta(\wt{h}\wh{\dt{f}}+\wh{\dt{h}}\wt{f})=0,\\
& \wt{f}\dt{f}-f^2=gh.
\end{align}
\end{subequations}
System \eqref{dAKNS2-bili} is in Hirota bilinear form since it is gauge invariant under transformation
\begin{align}
f\rightarrow f~\mbox{exp}(\alpha_0 n+\beta_0 m), \quad
g\rightarrow g~\mbox{exp}(\alpha_0 n+\beta_0 m), \quad
h\rightarrow h~\mbox{exp}(\alpha_0 n+\beta_0 m),
\end{align}
where $\alpha_0$ and $\beta_0$ are two constants \cite{HZ-JPA}.

We restrict our attention to the following condition equations set (CES)
\begin{subequations}
\label{dAKNS2-CES}
\begin{align}
& \wt{\Phi}=e^K\Phi, \quad \wh{\Phi}=[(\delta(E^{2}-1)-i)(\delta(1-E^{-2})-i)^{-1}]^{\frac{1}{2}}\Phi, \\
& \wt{\Psi}=e^{-K}\Psi, \quad \wh{\Psi}=[(\delta(E^{2}-1)-i)(\delta(1-E^{-2})-i)^{-1}]^{\frac{1}{2}}\Psi,
\end{align}
\end{subequations}
where $K$ is an invertible constant matrix of order $N+M+2$.
For the double Casoratian solutions of bilinear system \eqref{dAKNS2-bili} we have the following result.
\begin{Thm}
The double Casorati determinants
\begin{align}
\label{dAKNS2-dCs}
f=|\wh{\Phi^{(N)}};\wh{\Psi^{(M)}}|,\quad g=|\wh{\Phi^{(N+1)}};\wh{\Psi^{(M-1)}}|, \quad h=|\wh{\Phi^{(N-1)}};\wh{\Psi^{(M+1)}}|
\end{align}	
solve the bilinear system \eqref{dAKNS2-bili}, provided that the basic column vectors $\Phi$ and $\Psi$ are given by the CES \eqref{dAKNS2-CES}.
\end{Thm}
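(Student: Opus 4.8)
The plan is to verify that the Casoratians \eqref{dAKNS2-dCs} satisfy each of the three bilinear equations in \eqref{dAKNS2-bili} by reducing every bilinear identity to a single Laplace-type (Plücker) expansion of a determinant of size $(N+M+2)+1$ or $(N+M+3)$, in the spirit of the classical Casoratian technique. First I would work out, from the CES \eqref{dAKNS2-CES}, how the elementary shift operations act on the consecutive-column blocks. Write $A = \bigl[(\delta(E^{2}-1)-i)(\delta(1-E^{-2})-i)^{-1}\bigr]^{1/2}$, so that $\wh{\Phi}=A\Phi$, $\wh{\Psi}=A\Psi$, $\wt{\Phi}=e^{K}\Phi$, $\wt{\Psi}=e^{-K}\Psi$; note $E$ commutes with $A$ and with $e^{\pm K}$, so all four shifted vectors are again generated by the same recursion in the $E^{2}$-direction. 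The key algebraic observation is that the ``hat'' shift multiplies each column block by $A$ but, crucially, the operator identity $A\cdot(\delta(1-E^{-2})-i) = \delta(E^{2}-1)-i$ lets one trade a factor of $A$ for a two-term combination $\delta E^{2}-\delta-i$ acting on an $A^{-1}$-scaled column. This is precisely the mechanism that converts the products $\wh{g}f$, $g\wh{f}$, $\wt{g}\,\wh{\dt{f}}$, $\wh{\dt{g}}\,\wt{f}$ appearing in (\ref{dAKNS2-bili}a) into columns of one big determinant.

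Next I would carry out the reduction for the two ``$u$/$v$'' equations (\ref{dAKNS2-bili}a)–(\ref{dAKNS2-bili}b); by the obvious $\Phi\leftrightarrow\Psi$, $N\leftrightarrow M$, $\delta\to-\delta$ symmetry of the CES and of \eqref{dAKNS2-dCs}, the second follows from the first, so only (\ref{dAKNS2-bili}a) needs real work. The strategy there is standard: each of $\wh{g}f$, $g\wh{f}$, $\wt{g}\wh{\dt{f}}$, $\wh{\dt{g}}\wt{f}$ is, after pulling out the scalar factors generated by $A$ and $e^{K}$ on the relevant columns, expressible as a maximal minor of a common $(N+M+3)\times(N+M+3)$ matrix whose columns are $\Phi$-columns $E^{0},E^{2},\dots,E^{2N+2}$, $\Psi$-columns $E^{0},\dots,E^{2M-2}$, together with two extra auxiliary columns built from $\Phi$ using $E^{2}\Phi$ and the combination $(\delta E^{2}-\delta-i)\Phi$. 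Applying the Laplace expansion of this determinant along the block of the two auxiliary columns produces exactly the alternating sum of products of the four minors, and matching the coefficients $i$, $-\delta$, $\delta$ against the $\pm$ signs from the Laplace cofactors gives (\ref{dAKNS2-bili}a) identically. For the bilinear equation (\ref{dAKNS2-bili}c), $\wt{f}\dt{f}-f^{2}=gh$, I would use the well-known Jacobi/Plücker identity relating the $N+M+2$ minors $f$, $\wt f$, $\dt f$, and the two ``shifted-by-one-block'' minors $g=|\wh{\Phi^{(N+1)}};\wh{\Psi^{(M-1)}}|$ and $h=|\wh{\Phi^{(N-1)}};\wh{\Psi^{(M+1)}}|$: these five are precisely the $2\times2$ minors of a bordered matrix, and the quadratic Plücker relation among them is (\ref{dAKNS2-bili}c). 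Here only the column structure matters, so the CES enters merely to guarantee $\wt\Phi,\dt\Phi$ etc.\ lie in the same column span.

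The main obstacle I anticipate is the bookkeeping in the first step: the operator $A$ is a formal square root, and one must be careful that the scalar prefactors it generates on each shifted column (and the way $e^{\pm K}$ interacts with them) combine so that a \emph{common} overall factor can be pulled out of all four terms of (\ref{dAKNS2-bili}a), leaving a genuine determinantal identity with no leftover $E$-dependent scalars. Concretely, one needs the identity $\wh{\dt{f}}/f$ (which defines $w$ in \eqref{dAKNS2-uvw-tran}) to be consistent with $\wh\Phi=\wh\Psi=A\Phi,A\Psi$, i.e.\ that $A$ acting on $N+M+2$ consecutive columns contributes a scalar equal to the product of the ``per-column'' eigenvalue-type factors; this is where the precise form of $A$ — chosen exactly so that $A(\delta(1-E^{-2})-i)=\delta(E^{2}-1)-i$ — is indispensable. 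Once that normalization is pinned down, the remaining verifications are routine Laplace/Plücker expansions with no further surprises.
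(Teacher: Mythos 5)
The paper states this theorem without giving a proof (it leans on the classical Casoratian machinery and the references for the discrete NLS Casoratian solutions), so the only thing to assess is whether your sketch would actually go through; as written it would not, for two concrete reasons. First, the hat shift does not act by prefactors that can be pulled out as a common factor of the four terms of the first equation of \eqref{dAKNS2-bili}. Using $\wt{\Phi}=e^{K}\Phi$, $\wt{\Psi}=e^{-K}\Psi$, the CES \eqref{dAKNS2-CES} makes the hat shift act on the $\Phi$-columns by a constant matrix $H$ with $H^{2}=(\delta(e^{2K}-I)-iI)(\delta(I-e^{-2K})-iI)^{-1}$, and on the $\Psi$-columns by a \emph{different} matrix (the inverse factor, as one reads off from \eqref{dAKNS2-Phsi-K}). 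These factors are non-scalar in the Jordan-block case and column/row-dependent even in the diagonal case (they are the $e^{\tau_j}$ of \eqref{xi}), and they differ between the two blocks, so $\wh{g}f$, $g\wh{f}$, $\wt{g}\,\wh{\dt{f}}$, $\wh{\dt{g}}\,\wt{f}$ are not maximal minors of one common bordered matrix, and no common overall factor cancels among the four terms. What the verification actually requires are shift formulas obtained from the quadratic relation $H^{2}\bigl(\delta(I-e^{-2K})-iI\bigr)=\delta(e^{2K}-I)-iI$ (this is how the square root is eliminated, since only single hat shifts occur in the bilinear form), which are then fed into the Laplace-expansion lemma; ``pinning down the normalization'' is precisely this work, not bookkeeping, and your proposal never produces these formulas.

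Second, the third equation $\wt{f}\dt{f}-f^{2}=gh$ is not a bare Jacobi/Pl\"ucker identity in which ``only the column structure matters''. The determinants $f,g,h$ are built on even shifts $E^{2j}$, while $\wt{f},\dt{f}$ are built on odd shifts; for arbitrary column sequences no quadratic identity relates the two families (already for $N=M=0$ the claimed relation fails for generic vectors). One must first use $\wt{\Phi}=e^{K}\Phi$, $\wt{\Psi}=e^{-K}\Psi$ to rewrite $\wt{f}$ and $\dt{f}$ as even-shift Casoratians — the factors $\det e^{\pm K}$ cancelling only in the product $\wt{f}\dt{f}$ — and only then does a Pl\"ucker-type expansion close the identity; so the CES enters essentially, not ``merely to guarantee the span''. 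Finally, deducing the second bilinear equation from the first is plausible but not by the symmetry you state: one needs the substitution $(\Phi,\Psi,K,\delta)\to(\Psi,\Phi,-K,-\delta)$ together with a check of how the hat relations transform, and this is exactly the delicate point, because \eqref{dAKNS2-CES} as written uses the same operator for $\wh{\Phi}$ and $\wh{\Psi}$ (for which $\delta\to-\delta$ is not a symmetry at all), while the consistent reading is the one encoded in \eqref{dAKNS2-Phsi-K}, where $\Psi$ evolves by the inverse factor. In short, your overall strategy (dispersion-derived shift formulas plus Laplace expansion) is the standard and correct shape, but each of the three steps, as described, omits or misstates the mechanism that makes it work.
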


The CES \eqref{dAKNS2-CES} implies
\begin{subequations}
\label{dAKNS2-Phsi-K}
\begin{align}
& \Phi=e^{Kn}[(2\delta e^{K}\sinh K-iI)(2\delta e^{-K}\sinh K-iI)^{-1}]^{\frac{m}{2}}C^{+}, \\
& \Psi=e^{-Kn}[(2\delta e^{K}\sinh K-iI)(2\delta e^{-K}\sinh K-iI)^{-1}]^{-\frac{m}{2}}C^{-},
\end{align}
\end{subequations}
where $C^{\pm}=(c_{1}^{\pm},c_{2}^{\pm},\dots c_{N+1}^{\pm};d_{1}^{\pm},d_{2}^{\pm},\dots d_{M+1}^{\pm})^{\st}$ are constant column vectors
and $I$ is the unit matrix whose index indicating the size is omitted.

Analogous to the similar analysis in \cite{XZS-SAPM},  we replace $K$ by its similar matrix, giving the following result.
\begin{Thm}
\label{Solu-K-wbK}
$K$ and its any similar form lead to the same $u,~v$ and $w$ through \eqref{dAKNS2-uvw-tran} and \eqref{dAKNS2-dCs}.
\end{Thm}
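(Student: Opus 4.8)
The plan is to show that replacing $K$ by $\wb{K}=PKP^{-1}$ for any invertible constant matrix $P$ leaves $u$, $v$, $w$ unchanged. First I would observe that the CES \eqref{dAKNS2-CES} is \emph{linear} in the column vectors $\Phi$ and $\Psi$, and the coefficient matrix appearing in the $\wh{\phantom{a}}$-shift, namely $B:=[(\delta(E^2-1)-i)(\delta(1-E^{-2})-i)^{-1}]^{1/2}$, is built purely from the shift operator $E$ and hence commutes with every constant matrix. Consequently, if $(\Phi,\Psi)$ solves the CES with matrix $K$ and constant vectors $C^{\pm}$, then $(\wb\Phi,\wb\Psi):=(P\Phi,P\Psi)$ solves the CES with matrix $\wb K=PKP^{-1}$ and constant vectors $\wb C^{\pm}=PC^{\pm}$; this is immediate from $\wt{P\Phi}=P\wt\Phi=Pe^K\Phi=e^{\wb K}P\Phi$ and similarly for the hat-shift and for $\Psi$, using the explicit form \eqref{dAKNS2-Phsi-K} together with the fact that $P$ commutes with $B$.

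Next I would track this change through the Casoratians \eqref{dAKNS2-dCs}. Since $\wb\Phi=P\Phi$ and the $\wh{\phantom{a}}$-shift acts columnwise, the consecutive-column block $\wh{\wb\Phi^{(N)}}=(P\Phi, E^2 P\Phi,\ldots,E^{2N}P\Phi)=P\,(\Phi,E^2\Phi,\ldots,E^{2N}\Phi)$, i.e. left multiplication by $P$, and likewise $\wh{\wb\Psi^{(M)}}=P\,\wh{\Psi^{(M)}}$. Therefore the new $f$-Casoratian is
\begin{align*}
\wb f=|P\wh{\Phi^{(N)}};P\wh{\Psi^{(M)}}|=\det(P)\,|\wh{\Phi^{(N)}};\wh{\Psi^{(M)}}|=\det(P)\,f,
\end{align*}
and in exactly the same way $\wb g=\det(P)\,g$ and $\wb h=\det(P)\,h$, because each of $f,g,h$ is an $(N+M+2)\times(N+M+2)$ determinant whose every column gets multiplied on the left by $P$. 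The scalar factor $\det(P)$ is a nonzero constant independent of $(n,m)$.

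Finally I would substitute into the transformation \eqref{dAKNS2-uvw-tran}: $\wb u=\wb g/\wb f=(\det(P)g)/(\det(P)f)=g/f=u$, and similarly $\wb v=\wb h/\wb f=v$; for $w$, the factor $\det(P)$ cancels between numerator and denominator of $w=\wh{\dt f}f/(\wh f\dt f)$ since each of the four determinants scales by the same $\det(P)$. Hence $K$ and $\wb K$ yield identical $u,v,w$. The only mild subtlety worth spelling out — and the step I would be most careful about — is the interchange of the shift operators $\wt{\phantom{a}},\wh{\phantom{a}}$ with left multiplication by the constant matrix $P$: the hat-shift is not a plain index shift but involves the operator $B$, so one must invoke that $B$ is a function of $E$ alone and therefore commutes with the constant matrix $P$; once that observation is in place the rest is bookkeeping with the multilinearity of the determinant.
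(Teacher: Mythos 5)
Your proposal is correct and follows essentially the same route as the paper: the similarity transformation $\wb{K}=PKP^{-1}$ lifts to $\wb{\Phi}=P\Phi$, $\wb{\Psi}=P\Psi$ (with $\wb{C}^{\pm}=PC^{\pm}$), so each double Casoratian $f,g,h$ acquires the common factor $\det(P)$, which cancels in $u$, $v$ and $w$. Your extra remark that the hat-shift operator commutes with constant matrices is a sound justification of a step the paper handles implicitly via the explicit formula \eqref{dAKNS2-Phsi-K}.
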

\begin{proof}
Let $\wb{K}$ be a matrix that is similar to $K$, i.e.,
\begin{align}
\label{Ga-K}
\wb{K}=TKT^{-1},
\end{align}
where $T$ is the transform matrix. Substituting \eqref{Ga-K} into \eqref{dAKNS2-Phsi-K} yields two new basic column vectors
\begin{subequations}
\label{dKANS2-Phsi-wbK}
\begin{align}
& \wb{\Phi}=T\Phi=e^{\wb{K} n}[(2\delta e^{\wb{K}}\sinh \wb{K}-iI)(2\delta e^{-\wb{K}}\sinh \wb{K}-iI)^{-1}]^{\frac{m}{2}}\wb{C}^{+}, \\
& \wb{\Psi}=T\Ps=e^{-\wb{K} n}[(2\delta e^{\wb{K}}\sinh \wb{K}-iI)(2\delta e^{-\wb{K}}\sinh \wb{K}-iI)^{-1}]^{-\frac{m}{2}}\wb{C}^{-}
\end{align}
\end{subequations}
with $\wb{C}^{\pm}=TC^{\pm}$. Denoting $f(\Phi,\Psi)$, $g(\Phi,\Psi)$ and
$h(\Phi,\Psi)$ are the double Casorati determinants composed by $\Phi$
and $\Psi$ defined as \eqref{dAKNS2-Phsi-K}, then we can easily find
$f(\wb{\Phi},\wb{\Psi})=|T|f(\Phi,\Psi)$, $g(\wb{\Phi},\wb{\Psi})=|T|g(\Phi,\Psi)$
and $h(\wb{\Phi},\wb{\Psi})=|T|h(\Phi,\Psi)$, which indicates that
$u,~v$ and $w$ are similarity invariant for $K$ and its any similar matrix.
\end{proof}

By the Theorem \ref{Solu-K-wbK}, in what follows we just need to consider the Jordan canonical form of $\Phi$ and $\Psi$:
\begin{subequations}
\label{dAKNS2-Phsi-Ga}
\begin{align}
\label{dAKNS2-Phi-Ga}
& \Phi=e^{\Ga n}[(2\delta e^{\Ga}\sinh \Ga-iI)(2\delta e^{-\Ga}\sinh \Ga-iI)^{-1}]^{\frac{m}{2}}C^{+}, \\
& \Psi=e^{-\Ga n}[(2\delta e^{\Ga}\sinh \Ga-iI)(2\delta e^{-\Ga}\sinh \Ga-iI)^{-1}]^{-\frac{m}{2}}C^{-},
\end{align}
\end{subequations}
where $\Ga$ is the Jordan canonical matrix.

\section{dNLS equation \eqref{dNLS}: solutions and continuum limits}

In this section, we perform the framework of bilinearization reduction to
study exact solutions of the equation \eqref{dNLS} from those of the dAKNS(2) system \eqref{dAKNS2}.
Moreover, semi-continuous limit and full-continuous limit are introduced
to produce the semi-discrete NLS equation, as well as the continuous NLS equation, respectively.
Exact solutions of the resulting semi-discrete NLS equation are also discussed.

\subsection{Solutions to equation \eqref{dNLS}}

The dNLS equation \eqref{dNLS} is reduced from the system \eqref{dAKNS2} by imposing constraint $(v=u^*,w=w^*)$.
After taking $M=N$ and replacing $C^{\pm}$ by $e^{\mp N\Gamma}C^{\pm}$, equation \eqref{dAKNS2-dCs} gives the
solutions of \eqref{dNLS}, which are listed in the next theorem.
\begin{Thm}
The functions
\begin{align}
\label{dNLS-uw-fg}
u=g/f,\quad w=\wh{\dt{f}}f/(\wh{f}\dt{f})
\end{align}
with
\begin{align}
\label{dNLS-solu}
f=|e^{-N\Ga}\wh{\Phi^{(N)}};
e^{N\Ga}\wh{\Psi^{(N)}}|,\quad
g=|e^{-N\Ga}\wh{\Phi^{(N+1)}};
e^{N\Ga}\wh{\Psi^{(N-1)}}|
\end{align}
solve the equation \eqref{dNLS}, if the $(2N+2)$-th order column vectors $\Phi$ and $\Psi$ are given
by \eqref{dAKNS2-Phsi-Ga} satisfy the relation
\begin{align}
\label{dNLS-Phsi-T}
\Psi=T\Phi^*,
\end{align}
where $T\in\mathbb{C}^{(2N+2)\times(2N+2)}$ is a constant matrix satisfying matrix equations
\begin{align}
\label{dNLS-AT}
\Ga T+ T\Ga^*=\bm 0,\quad TT^*=-I,
\end{align}
and we require $C^-=TC^{+^*}$.
\end{Thm}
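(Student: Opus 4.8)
The plan is to verify that the reduction constraint $(v=u^*, w=w^*)$ is consistent with the solution formulas inherited from Theorem 1, by showing that the relation $\Psi = T\Phi^*$ together with the algebraic conditions \eqref{dNLS-AT} forces $h = \pm f^*$ and $g = $ (something compatible with $v = u^*$). First I would start from the known fact that, by Theorem 1, the Casoratians \eqref{dAKNS2-dCs} built from $\Phi,\Psi$ satisfying the CES \eqref{dAKNS2-CES} solve the bilinear system \eqref{dAKNS2-bili}; after the shift $C^\pm \to e^{\mp N\Gamma}C^\pm$ and $M=N$ this produces \eqref{dNLS-solu}. The task is then purely to check that under \eqref{dNLS-Phsi-T}–\eqref{dNLS-AT} one has $f^* = (\text{const})\, f$ and $g^* = (\text{const})\, h$ in such a way that $w^* = w$ and $u^* = v$, so that the third bilinear equation and the reality reduction close up.

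The key computational step is to conjugate the explicit formula \eqref{dAKNS2-Phi-Ga} for $\Phi$. Taking complex conjugates and using $\Psi = T\Phi^*$ gives
\begin{align}
\Psi = T e^{\Gamma^* n}\bigl[(2\delta e^{\Gamma^*}\sinh\Gamma^* - iI)(2\delta e^{-\Gamma^*}\sinh\Gamma^* - iI)^{-1}\bigr]^{\frac{m}{2}} C^{+*}. \nonumber
\end{align}
Now I would push $T$ through using the first equation of \eqref{dNLS-AT}, namely $\Gamma T = -T\Gamma^*$, which gives $T e^{\Gamma^* n} = e^{-\Gamma n} T$ and more generally $T\, p(\Gamma^*) = p(-\Gamma)\, T$ for any power series $p$; applying this to the bracketed matrix function (noting $\sinh$ is odd, so $\sinh(-\Gamma) = -\sinh\Gamma$, and checking the combination $2\delta e^{\Gamma}\sinh\Gamma - iI$ transforms correctly) should yield exactly
\begin{align}
\Psi = e^{-\Gamma n}\bigl[(2\delta e^{\Gamma}\sinh\Gamma - iI)(2\delta e^{-\Gamma}\sinh\Gamma - iI)^{-1}\bigr]^{-\frac{m}{2}} TC^{+*}, \nonumber
\end{align}
which matches \eqref{dAKNS2-Phi-Ga} for $\Psi$ precisely when $C^- = TC^{+*}$, confirming the last requirement of the theorem and that $\Psi$ indeed satisfies the $\Psi$-part of the CES. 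Then, conjugating the Casoratians: $f^* = |e^{-N\Gamma^*}\wh{\Phi^{(N)*}}; e^{N\Gamma^*}\wh{\Psi^{(N)*}}|$; substituting $\Phi^* = T^{-1}\Psi$ and $\Psi^* = T^{-1}\Phi^*{}^* $ — better, use $\Phi^* = (TT^*)^{-1}$-type manipulations via $TT^* = -I$, i.e. $T^* = -T^{-1}$, so $\Phi^* = -T^{-1}\Psi$ composed appropriately — and factoring the constant matrix $T$ (or $T^{-1}$) out of each column block, picking up a factor $|T|^{\pm1}$ and a sign from the determinant size. The column blocks $e^{-N\Gamma}\wh{\Phi^{(N)}}$ and $e^{N\Gamma}\wh{\Psi^{(N)}}$ must swap roles under conjugation up to the matrix $T$, which is where the symmetric choice $M=N$ and the rescaling $e^{\mp N\Gamma}$ are essential: they make $f^*$ proportional to $f$ itself, and likewise relate $g^*$ to $h$.

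The main obstacle I expect is bookkeeping the scalar prefactors: the determinant of $T$, the signs from permuting the $2N+2$ columns when the $\Phi$-block and $\Psi$-block exchange places, and the powers of $i$ hidden in the $[\cdots]^{m/2}$ factors (one must be careful that the square-root branch is consistent under conjugation, using $\overline{(2\delta e^{-\Gamma}\sinh\Gamma - iI)} = 2\delta e^{-\Gamma^*}\sinh\Gamma^* + iI$ and that $\Gamma T = -T\Gamma^*$ converts $e^{\pm\Gamma^*}$ into $e^{\mp\Gamma}$). Once these constants are tracked, one finds $f^* = c\,f$, $h = \pm g^*/c$ for an explicit unimodular-type constant $c$, whence $u^* = g^*/f^* = \pm h/f = \pm v$ and $w^* = \overline{\wh{\dt f}f/(\wh f \dt f)} = w$ (the constant $c$ cancels in the ratio defining $w$); after absorbing any residual sign into the normalization of $C^+$, this gives exactly $v = u^*$ and $w = w^*$, so \eqref{dAKNS2-bili} reduces to \eqref{dNLS} and the stated functions solve the dNLS equation. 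The verification that \eqref{dNLS-AT} is actually solvable (so the theorem is non-vacuous) is a separate point I would note but relegate to the examples/soliton section, since it is handled there by the explicit Jordan-block constructions.
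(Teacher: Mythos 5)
Your plan is essentially the argument the paper itself relies on (the theorem is stated there without an explicit proof, as the standard bilinearization--reduction step): invoke Theorem 1 for the bilinear system, check that $\Psi=T\Phi^{*}$ with $\Gamma T=-T\Gamma^{*}$ and $T^{*}=-T^{-1}$ is compatible with \eqref{dAKNS2-Phsi-Ga} precisely when $C^{-}=TC^{+*}$, and then show $f^{*}=|T|^{-1}f$ and $g^{*}=|T|^{-1}h$ (the $e^{\mp N\Gamma}$ prefactors and $M=N$ being exactly what makes the two column blocks exchange under conjugation), so that $v=u^{*}$, $w=w^{*}$ and \eqref{dAKNS2-bili} reduces to \eqref{dNLS}. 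Apart from two harmless slips in the sketch (the conjugate of $-iI$ is $+iI$ in your first display, which you later correct, and the substitutions should read $\Phi^{*}=T^{-1}\Psi$, $\Psi^{*}=-T^{-1}\Phi$), the computation closes exactly as you describe, with the constant $|T|$ cancelling in both $u$ and $w$.
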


\noindent {\bf Remark 1.} \textit{The substitution of \eqref{dNLS-Phsi-T} into \eqref{dNLS-solu} yields that
functions \eqref{dNLS-uw-fg} together with
\begin{align}
\label{dNLS-solu-1}
f=|e^{-N\Ga}\wh{\Phi^{(N)}};
e^{N\Ga}T\wh{\Phi^{*(-N)}}|,\quad
g=|e^{-N\Ga}\wh{\Phi^{(N+1)}};
e^{N\Ga}T\wh{\Phi^{*(-N+1)}}|,
\end{align}
solve the dNLS equation \eqref{dNLS}.}

Next we give some explicit solutions for the dNLS equation \eqref{dNLS}. As a preliminary step, let matrices $\Ga$ and $T$ be of form
\begin{align}
\label{dNLS-LT-form}
\Ga=\left(
\begin{array}{cc}
L & \bm 0  \\
\bm 0 & -L^*
\end{array}\right),
\quad
T=\left(
\begin{array}{cc}
\bm 0 & I  \\
-I & \bm0
\end{array}\right),
\end{align}	
where $L$ is a $(N+1)$-th Jordan canonical matrix. Soliton solutions
and Jordan-block solutions can be derived in view of different kinds of eigenvalues of $\Ga$.

\subsubsection{Soliton solutions}
If $L$ is a diagonal matrix
\begin{align}
\label{L-diag}
L=\text{diag}(k_1,k_2,\ldots, k_{N+1}),
\end{align}
where complex constants $\{k_j|k_i\neq k_j,~i\neq j\}$ distinguish the discrete spectral parameters.
Then we immediately get the entries in the Casoratian \eqref{dNLS-solu-1}, given by
\begin{align}
\label{dNLS-Phij-ex}
\Phi_{j}=\left\{
\begin{aligned}
& c_je^{\xi_j}, \quad j=1, 2, \ldots, N+1, \\
& d_je^{-\xi_s^*}, \quad j=N+1+s,\quad s=1,2,\dots,N+1,
\end{aligned}	
\right.
\end{align}
where $c_j=c_j^+$ and $d_j=d_j^+$ are complex constants and
\begin{align}
\label{xi}
\xi_j=k_jn+\tau_j m, \quad e^{\tau_j}=\bigg(
\dfrac{i-2\delta e^{k_j}\sinh k_j}{i-2\delta e^{-k_j}\sinh k_j}\bigg)^{\frac{1}{2}}, \quad j=1,2,\ldots,N+1.
\end{align}	

In the case of $N=0$, we arrive at the one-soliton solution
\begin{subequations}
\label{dNLS-1ss}
\begin{align}
\label{dNLS-1ss-u}
& u=\frac{2c_1d_1e^{2i\text{Im}(k_1+\xi_1)}\sinh(2\text{Re}k_1)}
{|c_1|^2e^{2\text{Re}\xi_1}+|d_1|^2e^{-2\text{Re}\xi_1}}, \\
& w=\wh{\dt{f}}f/(\wh{f}\dt{f}), \quad f=-(|c_1|^2e^{2\text{Re}\xi_1}+|d_1|^2e^{-2\text{Re}\xi_1}),
\end{align}
\end{subequations}
where $\text{Re}(\kappa)$ and $\text{Im}(\kappa)$ represent real and
imaginary parts of $\kappa$, respectively. To consider the dynamics of this solution,
we denote $k_1=(a_1+ib_1)/2$ and get the envelop
\begin{align}
\label{dNLS-1ss-u-mo}
|u|^2=\sinh^2a_1\sech^2\big(a_1n+m\ln\rho_1/2+\ln|c_1/d_1|\big),
\end{align}
where
\begin{align}
\label{rho}
\rho_1=\frac{\delta^2{(e^{a_1}\cos{b_1}-1)}^2+{(\delta e^{a_1}\sin b_1-1)}^2}{\delta^2{(e^{-a_1}\cos{b_1}-1)}^2+{(\delta e^{-a_1}\sin{b_1}-1)}^2}.
\end{align}
It is easily seen that solution \eqref{dNLS-1ss-u-mo} is a stationary wave
when $\rho_1=1$. While when $\rho_1\neq 1$, this
solution describes a unidirectional soliton
traveling with amplitude $\sinh^2 a_1$, initial phase $\ln|c_1/d_1|$,
velocity $-\ln\rho_1/(2a_1)$, and trajectory (top trace)
\[n(m)=-(m\ln\rho_1/2+\ln|c_1/d_1|)/a_1.\]
We depict such a wave in Figure 1.
\vskip20pt
\begin{center}
\begin{picture}(120,80)
\put(-160,-23){\resizebox{!}{3.5cm}{\includegraphics{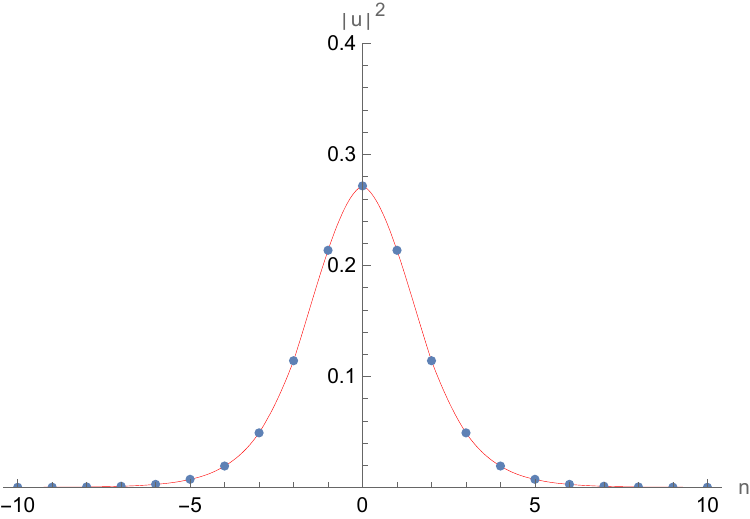}}}
\put(-20,-23){\resizebox{!}{4.0cm}{\includegraphics{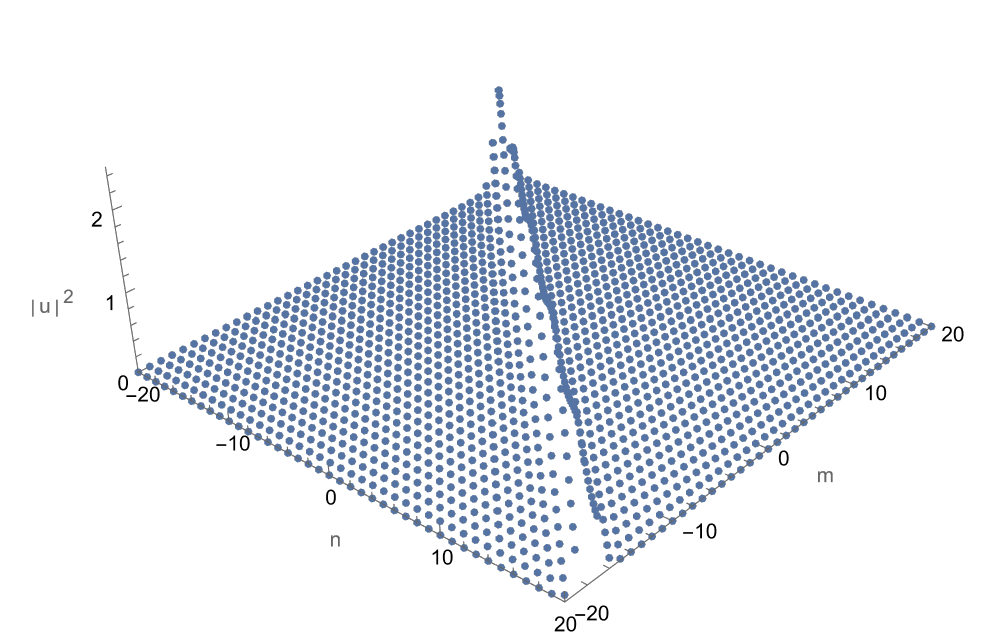}}}
\put(160,-23){\resizebox{!}{3.5cm}{\includegraphics{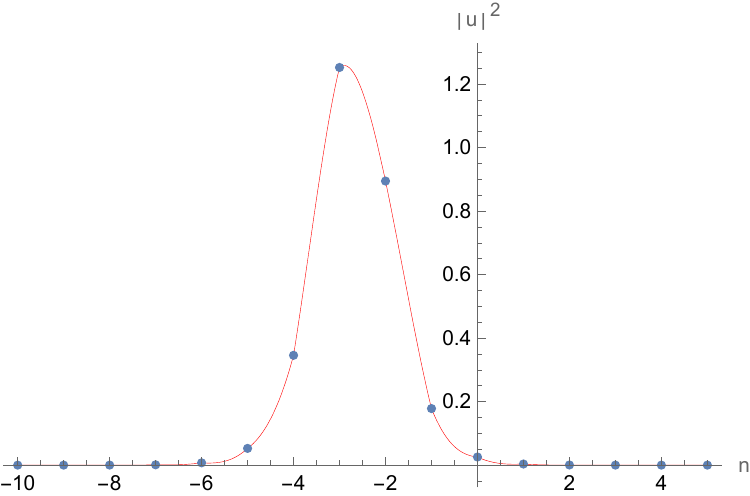}}}
\end{picture}
\end{center}
\vskip 20pt
\begin{center}
\begin{minipage}{15cm}{\footnotesize
\qquad\qquad\qquad(a)\qquad\qquad\qquad\qquad\qquad\qquad\qquad\qquad\quad(b)\qquad\qquad\qquad\qquad\qquad\qquad\qquad\quad (c) \\
{\bf Figure 1}. One-soliton solution $|u|^2$ given by \eqref{dNLS-1ss-u-mo} with $c_1=d_1=1$:
(a) a stationary wave with $a_1=0.5$, $\delta=1$ and corresponding $b_1$;
(b) shape and movement with $k_1=0.5+0.4i$ and $\delta=-2$;
(c) 2D-plot of (b) at $m=3$.}
\end{minipage}
\end{center}

In the case of $N=1$, we get two-soliton solution \eqref{dNLS-uw-fg} with
\begin{subequations}
\label{dNLS-2ss-fg-xi}
\begin{align}
f=\begin{vmatrix}
   c_1e^{-k_1+\xi_1} & c_1e^{k_1+\xi_1} &  d_1^*e^{k_1-\xi_1} & d_1^*e^{-k_1-\xi_1}  \\
   c_2e^{-k_2+\xi_2} & c_2e^{k_2+\xi_2} & d_2^*e^{k_2-\xi_2} & d_2^*e^{-k_2-\xi_2}  \\
   d_1e^{k_1^*-\xi_1^*} & d_1e^{-k_1^*-\xi_1^*} & -c_1^*e^{-k_1^*+\xi_1^*} & -c_1^*e^{k_1^*+\xi_1^*}  \\
   d_2e^{k_2^*-\xi_2^*} & d_2e^{-k_2^*-\xi_2^*} & -c_2^*e^{-k_2^*+\xi_2^*} & -c_2^*e^{k_2^*+\xi_2^*} \\
\end{vmatrix},
\end{align}
\begin{align}
g=\begin{vmatrix}
   c_1e^{-k_1+\xi_1} & c_1e^{k_1+\xi_1} & c_1e^{3k_1+\xi_1} & d_1^*e^{k_1-\xi_1}  \\
   c_2e^{-k_2+\xi_2} & c_2e^{k_2+\xi_2} & c_2e^{3k_2+\xi_2} & d_2^*e^{k_2-\xi_2}  \\
   d_1e^{k_1^*-\xi_1^*} & d_1e^{-k_1^*-\xi_1^*} & d_1e^{-3k_1^*-\xi_1^*} & -c_1^*e^{-k_1^*+\xi_1^*}  \\
   d_2e^{k_2^*-\xi_2^*} & d_2e^{-k_2^*-\xi_2^*} & d_2e^{-3k_2^*-\xi_2^*} & -c_2^*e^{-k_2^*+\xi_2^*}  \\
\end{vmatrix}.
\end{align}
\end{subequations}

Now, let us analyze the interaction of two-soliton. For convenience, the
asymptotic solitons are called as $k_1$-soliton and $k_2$-soliton, respectively. We rewrite $\xi_j,~(j=1,2)$ as
\begin{align}
\label{dNLS-xij}
\xi_j=\mathcal{A}_j+i\mathcal{B}_j, \quad \text{with} \quad
\mathcal{A}_j=(2a_jn+m\ln\rho_j)/4 \quad \text{and} \quad \mathcal{B}_j=(b_jn+m\arg e^{2\tau_j})/2,
\end{align}
in which $\rho_2=\rho_1|_{a_1\rightarrow a_2,b_1\rightarrow b_2}$ and
\begin{align}
\arg e^{2\tau_j}=\arctan\frac{\bar{X}_{a_j}\cd{X}_{-a_j}+\bar{X}_{-a_j}\cd{X}_{a_j}}
{\bar{X}_{a_j}\bar{X}_{-a_j}-\cd{X}_{a_j} \cd{X}_{-a_j}},
\end{align}
where
\begin{align}
\label{dNLS-Xa12}
\bar{X}_{a_j}:=\bar{X}^{b_j}_{a_j}=\delta(1-e^{a_j}\cos b_j), \quad \cd{X}_{a_j}:=\cd{X}^{b_j}_{a_j}=1-\delta e^{a_j}\sin b_j,
\end{align}
and $\bar{X}_{-a_j}$ and $\cd{X}_{-a_j}$ are the cases of replacing $a_j$ with $-a_j$ inside $\bar{X}_{a_j}$ and $\cd{X}_{a_j}$, respectively.
It is easy to find that $e^{\xi_j}\rightarrow \infty$ as
$\mathcal{A}_j\rightarrow +\infty$ and $e^{\xi_j}\rightarrow 0$ as
$\mathcal{A}_j\rightarrow -\infty$. Without loss of generality, we assume $c_j=d_j=1$, $0<a_1<a_2$ and $0<b_1<b_2<\pi/2$ to
guarantee $\mathcal{X}=a_1\ln\rho_2-a_2\ln\rho_1>0$.

For fixed $\xi_1$, it follows from \eqref{dNLS-xij} that
\begin{align}
\label{dNLS-xi2-1}
\xi_2=a_2\xi_1/a_1+m\mathcal{X}/(4a_1)+i(a_1\mathcal{B}_2-a_2\mathcal{B}_1)/a_1,
\end{align}
satisfies $e^{\xi_2}\rightarrow \infty$ as
$m\rightarrow +\infty$ and $e^{\xi_2}\rightarrow 0$ as $m\rightarrow -\infty$.
When $m\rightarrow +\infty$, neglecting subdominant exponential terms, we have
\begin{subequations}
\label{dNLS-fg-xi1+}
\begin{align}
&f\simeq4e^{-2\text{Re}(\xi_1-\xi_2)}(|\sinh(k_1+k^*_2)|^2+e^{4\text{Re}\xi_1}|\sinh(k_1-k_2)|^2),\\
&g\simeq8e^{2\text{Re}\xi_2+2i\text{Im}(k_1+k_2+\xi_1)}\sinh a_1\sinh(k_1-k_2)\sinh(k^*_1+k_2),
\end{align}	
\end{subequations}
and thus the $k_1$-soliton appears
\begin{align}
\label{dNLS-u-xi1+}
u\simeq\dfrac{2e^{2\xi_1+2i\text{Im}(k_1+k_2)}\sinh a_1\sinh(k_1-k_2)
\sinh(k^*_1+k_2)}{|\sinh(k_1+k^*_2)|^2+e^{4\text{Re}\xi_1}|\sinh(k_1-k_2)|^2}.
\end{align}
When $m\rightarrow -\infty$, we find
\begin{subequations}
\label{dNLS-fg-xi1-}
\begin{align}
&f\simeq4e^{-2\text{Re}(\xi_1+\xi_2)}(|\sinh(k_1-k_2)|^2+e^{4\text{Re}\xi_1}|\sinh(k_1+k^*_2)|^2),\\
&g\simeq8e^{-2\text{Re}\xi_2+2i\text{Im}(k_1+k_2+\xi_1)}\sinh a_1\sinh(k^*_1-k^*_2)\sinh(k_1+k^*_2),
\end{align}	
\end{subequations}
and thus the $k_1$-soliton exhibits
\begin{align}
\label{dNLS-u-xi1-}
u\simeq\dfrac{2e^{2\xi_1+2i\text{Im}(k_1+k_2)}\sinh a_1\sinh(k^*_1-k^*_2)
\sinh(k_1+k^*_2)}{|\sinh(k_1-k_2)|^2+e^{4\text{Re}\xi_1}|\sinh(k_1+k^*_2)|^2}.
\end{align}
As a results, the envelops of \eqref{dNLS-u-xi1+} and \eqref{dNLS-u-xi1-} can be expressed as
\begin{align}
\label{dNLS-u-xi1-mo}
|u|^2\simeq\dfrac{4\sinh^2 a_1(\cosh a^{-}_{12}-\cos b^{-}_{12})(\cosh a^{+}_{12}-\cos b^{-}_{12})
e^{2a_1n}\rho_1^m}
{(\cosh a^{\pm}_{12}-\cos b^{-}_{12}+(\cosh a^{\mp}_{12}-\cos b^{-}_{12})e^{2a_1n}\rho_1^{m})^2}, \quad
m\rightarrow \pm\infty,
\end{align}
where $\cdot_{ij}^{\pm}=\cdot_{i}\pm \cdot_j$, which travel with amplitude $\sinh^2a_1$,
velocity $-\ln\rho_1/(2a_1)$, and trajectory
\begin{align}
n(m)=-\frac{1}{2a_1}\left(m\ln\rho_1\pm
\ln\left(\frac{\cosh a^{-}_{12}-\cos b^{-}_{12}}{\cosh a^{+}_{12}-\cos b^{-}_{12}}\right)\right), \quad m\rightarrow \pm \infty.
\end{align}
The phase shift is consequently expressed as
\begin{align}
n(m)=-\frac{1}{a_1}
\ln\left(\frac{\cosh a^{-}_{12}-\cos b^{-}_{12}}{\cosh a^{+}_{12}-\cos b^{-}_{12}}\right).
\end{align}

For fixed $\xi_2$, it follows from \eqref{dNLS-xij} that
\begin{align}
\label{dNLS-xi1-2}
\xi_1=a_1\xi_2/a_2-m\mathcal{X}/(4a_2)+i(a_2\mathcal{B}_1-a_1\mathcal{B}_2)/a_2,
\end{align}
satisfies $e^{\xi_1}\rightarrow 0$ as
$m\rightarrow +\infty$ and $e^{\xi_1}\rightarrow \infty$ as $m\rightarrow -\infty$.
When $m\rightarrow +\infty$, neglecting subdominant exponential terms we derive
\begin{subequations}
\label{dNLS-fg-xi2+}
\begin{align}
&f\simeq4e^{-2\text{Re}(\xi_1+\xi_2)}(|\sinh(k_1-k_2)|^2+e^{4\text{Re}\xi_2}|\sinh(k_1+k^*_2)|^2),\\
&g\simeq-8e^{-2\text{Re}\xi_1+2i\text{Im}(k_1+k_2+\xi_2)}\sinh a_2\sinh(k^*_1-k^*_2)\sinh(k^*_1+k_2),
\end{align}	
\end{subequations}
the $k_2$-soliton thereby reads as
\begin{align}
\label{dNLS-u-xi2+}
u\simeq-\dfrac{2e^{2\xi_2+2i\text{Im}(k_1+k_2)}\sinh a_2\sinh(k^*_1-k^*_2)
\sinh(k^*_1+k_2)}{|\sinh(k_1-k_2)|^2+e^{4\text{Re}\xi_2}|\sinh(k_1+k^*_2)|^2}.
\end{align}
When $m\rightarrow -\infty$, since
\begin{subequations}
\label{dNLS-fg-xi2-}
\begin{align}
&f\simeq4e^{2\text{Re}(\xi_1-\xi_2)}(|\sinh(k_1+k^*_2)|^2+e^{4\text{Re}\xi_2}|\sinh(k_1-k_2)|^2),\\
&g\simeq-8e^{2\text{Re}\xi_1+2i\text{Im}(k_1+k_2+\xi_2)}\sinh a_2\sinh(k_1-k_2)\sinh(k_1+k^*_2),
\end{align}	
\end{subequations}
the $k_2$-soliton thereby becomes
\begin{align}
\label{dNLS-u-xi2-}
u\simeq-\dfrac{2e^{2\xi_2+2i\text{Im}(k_1+k_2)}\sinh a_2\sinh(k_1-k_2)\sinh(k_1+k^*_2)}
{|\sinh(k_1+k^*_2)|^2+e^{4\text{Re}\xi_2}|\sinh(k_1-k_2)|^2}.
\end{align}
The envelops of $k_2$-soliton \eqref{dNLS-u-xi2+} and \eqref{dNLS-u-xi2-} are
\begin{align}
\label{dNLS-u-xi2-mo}
|u|^2\simeq\dfrac{4\sinh^2a_2(\cosh a^-_{12}-\cos b^-_{12})(\cosh a^+_{12}-\cos b^-_{12})e^{2a_2n}\rho_2^m}
{(\cosh a^{\mp}_{12}-\cos b^-_{12}+(\cosh a^{\pm}_{12}-\cos b^-_{12})e^{2a_2n}\rho_2^m)^2}, \quad m\rightarrow \pm \infty,
\end{align}
which travel with amplitude $\sinh^2a_2$,
velocity $-\ln\rho_2/(2a_2)$, and trajectory
\begin{align}
n(m)=-\frac{1}{2a_2}\left(m\ln\rho_2\mp
\ln\left(\frac{\cosh a^{-}_{12}-\cos b^{-}_{12}}{\cosh a^{+}_{12}-\cos b^{-}_{12}}\right)\right), \quad m\rightarrow \pm \infty.
\end{align}
The phase shift is consequently expressed as
\begin{align}
n(m)=\frac{1}{a_2}
\ln\left(\frac{\cosh a^{-}_{12}-\cos b^{-}_{12}}{\cosh a^{+}_{12}-\cos b^{-}_{12}}\right).
\end{align}
The two-soliton solution $|u|^2$ given by \eqref{dNLS-2ss-fg-xi} is illustrated in Figure 2.
\vskip35pt
\begin{center}
\begin{picture}(120,80)
\put(-120,-30){\resizebox{!}{4.0cm}{\includegraphics{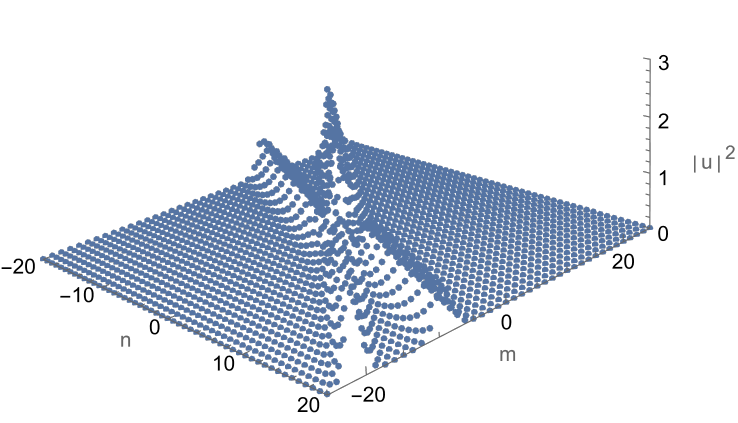}}}
\put(100,-23){\resizebox{!}{4.0cm}{\includegraphics{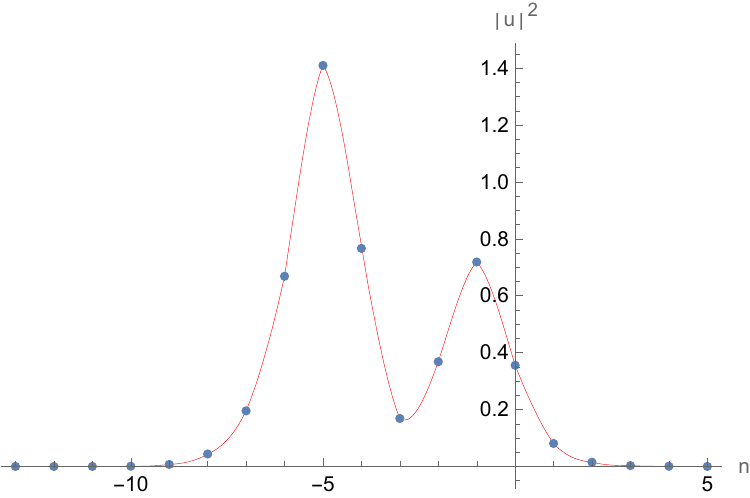}}}
\end{picture}
\end{center}
\vskip 30pt
\begin{center}
\begin{minipage}{15cm}{\footnotesize
\qquad\qquad\qquad\qquad\qquad\quad(a)\qquad\qquad\qquad\qquad\qquad\qquad\qquad\qquad\qquad\qquad\qquad\quad (b) \\
{\bf Figure 2}. Two-soliton solution $|u|^2$ given by \eqref{dNLS-2ss-fg-xi} with $c_1=d_1=c_2=d_2=1$, $\delta=-0.6$,
$k_1=0.4+0.1i$ and $k_2=0.5+0.6i$: (a) shape and movement; (b) 2D-plot of (a) at $m=7$.}
\end{minipage}
\end{center}

\noindent{\bf Remark 2}. When $k_2=ik^*_1$, \eqref{dNLS-2ss-fg-xi} generates the breather solution (see \cite{LWZ-SAPM}), whose
dynamical behaviors are illustrated in Figure 3.
\vskip35pt
\begin{center}
\begin{picture}(120,80)
\put(-150,-30){\resizebox{!}{5.0cm}{\includegraphics{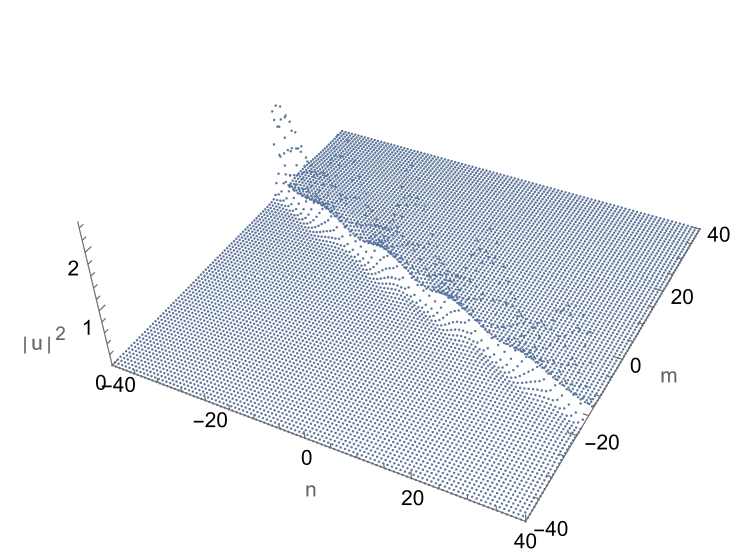}}}
\put(80,-23){\resizebox{!}{4.0cm}{\includegraphics{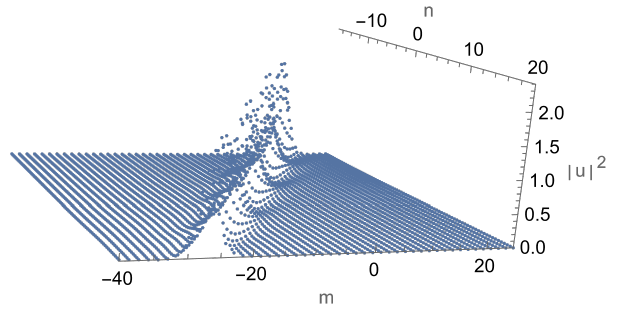}}}
\end{picture}
\end{center}
\vskip 20pt
\begin{center}
\begin{minipage}{15cm}{\footnotesize
\qquad\qquad\qquad\qquad\qquad\quad(a)\qquad\qquad\qquad\qquad\qquad\qquad\qquad\qquad\qquad\qquad\qquad (b) \\
{\bf Figure 3.} Breather solution $|u|^2$ given by \eqref{dNLS-2ss-fg-xi} with $\delta=-0.6$:
(a) $c_1=d_1=c_2=d_2=1$, $k_1=0.5+0.1i$ and $k_2=0.1+0.5i$; (b) $c_1=c_2=1+0.2i$, $d_1=d_2=0.4+0.1i$,
$k_1=0.48+0.08i$ and $k_2=0.08+0.48i$.}
\end{minipage}
\end{center}

\subsubsection{Jordan-block solutions}
Now let $L$ be a Jordan-block matrix
\begin{align}
\label{dNLS-L-Jor}
L=\left(
\begin{array}{cccc}
k_1 &   0 &  \cdots&  0\\
1 & k_1 &  \cdots&  0\\		
\vdots& \ddots&\ddots&\vdots\\
0&\cdots& 1 & k_1
\end{array}\right)_{(N+1)\times(N+1)}.
\end{align}	
Then we immediately obtain the entries in the Casoratian \eqref{dNLS-solu-1}, of form
\begin{align*}
\Phi_{j}=\left\{
\begin{aligned}
&c_1\dfrac{\partial^{j-1}_{k_1}e^{\xi_1}}{(j-1)!}, \quad j=1, 2, \dots, N+1,\\
&d_1\dfrac{\partial^{s-1}_{k_{1}^{*}}e^{-\xi_1^*}}{(s-1)!}, \quad j=N+1+s,\quad s=1,2,\dots,N+1.
\end{aligned}	
\right.
\end{align*}
In the case of $N=1$, the simplest Jordan-block solution is given by \eqref{dNLS-uw-fg},
in which
\begin{subequations}
\label{dNLS-JBS-fg}
\begin{align}
f& =4e^{-4\text{Re}\xi_1}((|d_1|^2+|c_1|^2e^{4\text{Re}\xi_1})^2+4|c_1d_1\eta_1|^2e^{4\text{Re}\xi_1}\sinh^2a_1),\\
g& =16c_1d_1e^{2ib_1}(|c_1|^2e^{2\xi_1}(\cosh a_1-\eta_1^*\sinh a_1) \nn \\
& \qquad+|d_1|^2e^{-2\xi_1^*}(\cosh a_1+\eta_1\sinh a_1))\sinh a_1,
\end{align}
\end{subequations}
where $\eta_1=\partial_{k_1}\xi_1$. This solution is depicted in Figure 4.

\vskip35pt
\begin{center}
\begin{picture}(120,80)
\put(-120,-30){\resizebox{!}{4.5cm}{\includegraphics{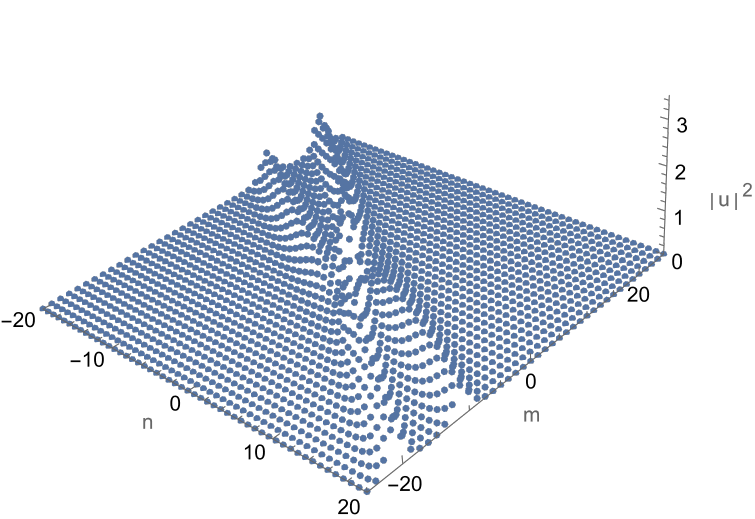}}}
\put(100,-23){\resizebox{!}{4.0cm}{\includegraphics{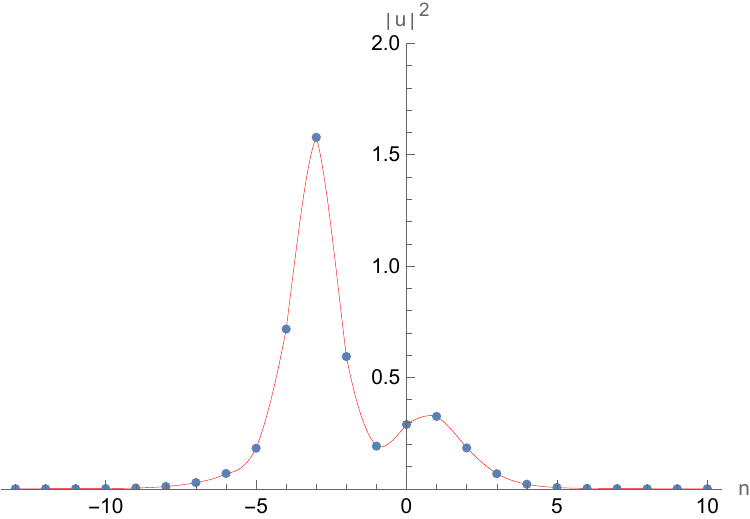}}}
\end{picture}
\end{center}
\vskip 30pt
\begin{center}
\begin{minipage}{15cm}{\footnotesize
\qquad\qquad\qquad\qquad\qquad\quad(a)\qquad\qquad\qquad\qquad\qquad\qquad\qquad\qquad\qquad\qquad\qquad\quad (b) \\
{\bf Figure 4.} Jordan-block solution $|u|^2$ given by \eqref{dNLS-JBS-fg} with $c_1=d_1=1$ and $\delta=-1$:
(a) shape and movement with $k_1=0.4+0.1i$; (b) 2D-plot of (a) at $m=4$.}
\end{minipage}
\end{center}

\subsection{Continuum limits}
\label{sec:dNLS-cl}

The semi-discrete NLS equation and NLS equation, namely the differential-difference NLS
and continuous NLS equations, can be recovered by imposing the continuum limits on the dNLS equation \eqref{dNLS}.
In order to get the semi-discrete and continuous versions of
the dNLS equation \eqref{dNLS}, we will firstly use the indicative formula
\begin{align}
\label{dc-re}
\lim\limits_{m\rightarrow\infty}(1+k/m)^m=e^k,
\end{align}
in the purpose to give the discrete-continuous connections
between the various parameters in the equation and the lattice spacing,
where the discrete exponential functions \eqref{xi}, i.e.,
\begin{align}
\label{xj-def}
e^{\xi_j}=e^{k_jn}
\bigg(\dfrac{i-2\delta e^{k_j}\sinh k_j}{i-2\delta e^{-k_j}\sinh k_j}\bigg)^{\frac{m}{2}}, \quad
j=1,2,\ldots,N+1
\end{align}
should be considered.

\subsubsection{Semi-continuous limit:}
\label{Semi-CL}

To use \eqref{xj-def}
for a straight limit in the $m$-direction, we write
\begin{align}
\label{xj-def-modi}
\bigg(\dfrac{i-2\delta e^{k_j}\sinh k_j}{i-2\delta e^{-k_j}\sinh k_j}\bigg)^{\frac{m}{2}}
=\bigg(1-\dfrac{4\delta\sinh^2k_j}{i-2\delta e^{-k_j}\sinh k_j}\bigg)^{\frac{m}{2}},
\quad j=1,2,\ldots,N+1
\end{align}
and therefore $\delta$ must approach zero as $m\rightarrow \infty$, i.e., $m\delta=z$. The
discrete exponential functions \eqref{xj-def} become
\begin{align}
\label{lambdaj-def}
e^{\xi_j} \rightarrow e^{\lambda_j}, \quad \text{with} \quad \lambda_j:=k_{j}n+2iz\sinh^2k_{j},
\quad j=1,2,\ldots,N+1,
\end{align}
and $u(n,m)\rightarrow u(n,z)$, $w(n,m)\rightarrow w(n,z)$.
Inserting the Taylor series
\begin{subequations}
\label{Tay-exp-dsd}
\begin{align}
& \wh{u}=u(z+\delta)=u+\delta u_z+\ldots, \\
& \wh{\dt{u}}=\dt{u}(z+\delta)=\dt{u}+\delta\dt{u}_z+\ldots,
\end{align}
\end{subequations}
into the equation \eqref{ndNLS} and noting that $w\rightarrow 1$, we know that $u(n,z)$ satisfies the differential-difference equation
\begin{align}
\label{sdNLS-z}
iu_z-2u+(1+|u|^2)(\wt{u}+\dt{u})=0,
\end{align}
where $u_z$ denotes the derivative with respect to $z$. It is readily seen that equation \eqref{sdNLS-z} is exactly the
semi-discrete NLS equation \cite{AL-JMP}.

Double Casoratian solutions of the equation \eqref{sdNLS-z} have
been discussed in \cite{DLZ-AMC}, which are summarized as follows.
\begin{Thm}
The function
\begin{align}
\label{sdNLS-u-fg}
u=g/f
\end{align}
with
\begin{align}
\label{sdNLS-z-solu}
f=|e^{-N\Ga}\wh{\Phi^{(N)}};e^{N\Ga}T\wh{\Phi^{*(-N)}}|,\quad
g=|e^{-N\Ga}\widehat{\Phi^{(N+1)}};e^{N\Ga}T\wh{\Phi^{*(-N+1)}}|
\end{align}
solves the equation \eqref{sdNLS-z}, where $\Phi=e^{\Ga n+2iz\sinh^2 \Ga}C^{+}$
and $T$ is a constant matrix of order $2(N+1)$ satisfying matrix equations
\begin{align}
\label{sdNLS-z-Ga-T}
\Ga T+ T\Ga^*=\bm 0,\quad TT^*=-I.
\end{align}
\end{Thm}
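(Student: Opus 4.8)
The plan is to obtain this theorem as the semi-continuous limit ($m\delta=z$, $\delta\to0$) of the dNLS solution theorem (the one producing $u=g/f$ from the double Casoratians \eqref{dNLS-solu-1} with the reduction matrix $T$ subject to \eqref{dNLS-AT}), running exactly the process that carries \eqref{dNLS} into \eqref{sdNLS-z}. Three things must be followed through the limit: the basic column vector $\Phi$, the reduction data $(T,C^{\pm})$, and the bilinear identities. For $\Phi$ one starts from the closed form \eqref{dAKNS2-Phi-Ga} with $K=\Ga$ a Jordan matrix, rewrites the $m$-th power factor as in \eqref{xj-def-modi}, and uses the indicative formula \eqref{dc-re}: keeping $m\delta=z$ fixed, $[(2\delta e^{\Ga}\sinh\Ga-iI)(2\delta e^{-\Ga}\sinh\Ga-iI)^{-1}]^{m/2}\to e^{2iz\sinh^{2}\Ga}$, so $\Phi\to e^{\Ga n+2iz\sinh^{2}\Ga}C^{+}$ (locally uniformly in $(n,z)$ in the diagonal case \eqref{L-diag}, and, by analyticity in the $k_{j}$ and uniform convergence on compacta, also in the Jordan-block case \eqref{dNLS-L-Jor}, whose entries are $k_{1}$-derivatives of $e^{\xi_{1}}$). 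Equivalently the limiting $\Phi$ obeys $\wt\Phi=e^{\Ga}\Phi$ and $\Phi_{z}=2i\sinh^{2}\Ga\,\Phi$ — the semi-discrete analogue of the CES \eqref{dAKNS2-CES} — and likewise $\Psi\to e^{-\Ga n-2iz\sinh^{2}\Ga}C^{-}$ with $\wt\Psi=e^{-\Ga}\Psi$, $\Psi_{z}=-2i\sinh^{2}\Ga\,\Psi$.

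Next, the reduction data survive intact because the matrix equations \eqref{dNLS-AT}, $\Ga T+T\Ga^{*}=\bm 0$ and $TT^{*}=-I$, together with the normalization $C^{-}=TC^{+^{*}}$, do not involve $\delta$; they reappear as \eqref{sdNLS-z-Ga-T}. From $\Ga T=-T\Ga^{*}$ one gets $\Ga^{2k}T=T\Ga^{*2k}$, hence $\sinh^{2}\Ga\,T=T\sinh^{2}\Ga^{*}$ and $e^{\pm\Ga}T=Te^{\mp\Ga^{*}}$; conjugating the limiting $\Phi$-relations and multiplying by $T$ then reproduces exactly the limiting $\Psi$-relations, so the reduction $\Psi=T\Phi^{*}$ is preserved. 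As in the discrete case, $TT^{*}=-I$ together with the choice $M=N$ and the $e^{\mp N\Ga}$, $C^{\pm}$ bookkeeping forces $h=f^{*}$ up to a constant, i.e. $v=h/f=(g/f)^{*}=u^{*}$, and reduces the double Casoratians of the dAKNS(2) theorem first to \eqref{dNLS-solu-1} and then, via the previous paragraph, to \eqref{sdNLS-z-solu}.

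Finally one passes to the limit in the bilinear identities. Inserting the Taylor expansions \eqref{Tay-exp-dsd} (and their $g,h$ analogues) into \eqref{dAKNS2-bili}, the first two equations vanish identically at order $\delta^{0}$ and their $\delta^{1}$-coefficients give the semi-discrete bilinear system
\begin{align}
& i(g_{z}f-g f_{z})-2gf+\wt g\,\dt f+\dt g\,\wt f=0,\\
& i(h_{z}f-h f_{z})+2hf-\wt h\,\dt f-\dt h\,\wt f=0,\\
& \wt f\,\dt f-f^{2}=gh,
\end{align}
while $w=\wh{\dt{f}}f/(\wh{f}\dt{f})\to1$. Dividing $u=g/f$, $v=h/f$ through these and using the third relation to rewrite $\wt f\,\dt f/f^{2}=1+uv$ yields $iu_{z}-2u+(1+uv)(\wt u+\dt u)=0$, which under $v=u^{*}$ is precisely \eqref{sdNLS-z}. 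Since for every $\delta>0$ the Casoratians \eqref{dNLS-solu-1} solve \eqref{dAKNS2-bili} by the dNLS theorem, and since by the first paragraph their entries converge locally uniformly together with their $z$-derivatives, the limiting determinants \eqref{sdNLS-z-solu} solve the displayed semi-discrete bilinear system; hence $u=g/f$ with \eqref{sdNLS-z-solu} solves \eqref{sdNLS-z}.

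The main obstacle is the legitimacy of the interchanges in the last step: that $\lim_{\delta\to0}$ commutes with the $2(N+1)\times2(N+1)$ Casorati determinants and with the $\delta^{-1}$-rescaling of the first two bilinear equations. This comes down to the locally uniform convergence of the Casoratian entries together with their $z$-derivatives — which in the Jordan-block case requires differentiating the first-paragraph limit in the spectral parameter and so leans on analyticity in $k_{1}$ — and to checking that the $\delta^{0}$ terms in the first two equations of \eqref{dAKNS2-bili} genuinely cancel, which is immediate from their structure. Alternatively, and this is the route of \cite{DLZ-AMC}, one can bypass the continuum limit entirely and verify directly, by Laplace expansion and Plücker (Casoratian) identities, that \eqref{sdNLS-z-solu} satisfies the semi-discrete bilinear system above using $\wt\Phi=e^{\Ga}\Phi$ and $\Phi_{z}=2i\sinh^{2}\Ga\,\Phi$, and then impose the reduction \eqref{sdNLS-z-Ga-T}.
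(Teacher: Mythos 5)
Your argument is essentially sound, but it is not the route the paper takes: the paper gives no proof of this theorem at all — it recovers the equation \eqref{sdNLS-z} by the formal limit and then simply quotes the double Casoratian solutions from \cite{DLZ-AMC}, where they are verified directly at the semi-discrete level via the bilinear form and Casoratian (Pl\"ucker) identities; the ``alternative'' you mention at the end is precisely that reference's method. Your primary argument instead transports the discrete theorem through the limit $m\delta=z$, $\delta\to0$: entrywise convergence of the matrix power to $e^{2iz\sinh^{2}\Ga}$ (matching \eqref{lambdaj-def}), persistence of the $\delta$-independent constraints \eqref{dNLS-AT} as \eqref{sdNLS-z-Ga-T}, and passage to the limit in \eqref{dAKNS2-bili} after dividing the first two equations by $\delta$, which correctly produces the semi-discrete bilinear system and hence \eqref{sdNLS-z}. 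That buys a self-contained proof within the paper's own continuum-limit framework and explains why the same $T$-equations reappear, at the cost of the analytic bookkeeping you flag: one must regard $f_{\delta},g_{\delta}$ as functions of continuous $z$ through the closed formula, check that the bilinear identities persist off the lattice, and have locally uniform convergence of entries together with their $z$-derivatives so that the difference quotients converge; note also that ``collecting powers of $\delta$'' is only formal, since $f,g$ themselves depend on $\delta$, but your final paragraph repairs this correctly by arguing through convergence rather than through a literal $\delta$-expansion. The direct verification of \cite{DLZ-AMC} is shorter and avoids all limit interchanges. One slip worth correcting: the reduction does not force $h=f^{*}$; it forces (up to constant factors) $f^{*}\propto f$ and $h\propto g^{*}$, which is what yields $v=h/f=u^{*}$ — as written, $h=f^{*}$ would give $v=f^{*}/f$, not $u^{*}$. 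This does not damage your proof, since you import the discrete theorem wholesale rather than re-deriving the reduction, but the intermediate relation should be stated correctly.
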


For the semi-discrete NLS equation \eqref{sdNLS-z}, its
one-soliton solution in double Casoratian form was presented explicitly in \cite{DLZ-AMC}, while the two-soliton
solution, as well as the Jordan-block solutions has not been touched yet. In what follows, we are interested in the
dynamics of the multi-soliton solutions and Jordan-block solutions for the equation \eqref{sdNLS-z}. To this end, we still
consider the block matrices $\Ga$ and $T$ in
\eqref{dNLS-LT-form} together with \eqref{L-diag} and \eqref{dNLS-L-Jor}.

\noindent {\it Soliton solutions:} For the diagonal matrix \eqref{L-diag}, equation \eqref{sdNLS-z-solu} with
\begin{align}
\Phi_{j}=\left\{
\begin{aligned}
& c_je^{\lambda_j},\quad j=1, 2, \ldots, N+1,\\
& d_je^{-\lambda_s^*},\quad j=N+1+s, \quad s=1,2,\ldots,N+1.
\end{aligned}	
\right.
\end{align}
supplies with the multi-soliton solutions.

In the case of $N=0$, the one-soliton solution is described as
\begin{align}
\label{sdNLS-1ss-u}
u=\frac{2c_1d_1e^{i(b_1+2\text{Im}\lambda_1)}\sinh a_1}
{|c_1|^2e^{2\text{Re}\lambda_1}+|d_1|^2e^{-2\text{Re}\lambda_1}},
\end{align}
and the carrier wave reads
\begin{align}
\label{sdNLS-1ss-u-mo}
|u|^2=\sinh^2a_1\sech^2(a_1n-2z\sin b_1\sinh a_1+\ln|c_1/d_1|).
\end{align}	
When $b_1=\ell\pi$ with $\ell\in\mathbb{Z}$, i.e., $k_1$ being a real number, then \eqref{sdNLS-1ss-u-mo} exhibits a
stationary wave. While when $b_1\neq\ell\pi$, equation \eqref{sdNLS-1ss-u-mo} describes a unidirectional soliton
traveling with velocity $2(\sin b_1\sinh a_1)/a_1$, amplitude $\sinh^2 a_1$, initial phase $\ln|c_1/d_1|$ and
top trace
\[n(z)=(2z\sin b_1\sinh a_1-\ln|c_1/d_1|)/a_1.\]
We depict the soliton \eqref{sdNLS-1ss-u-mo} in Figure 5.
\vskip35pt
\begin{center}
\begin{picture}(120,80)
\put(-160,-23){\resizebox{!}{3.5cm}{\includegraphics{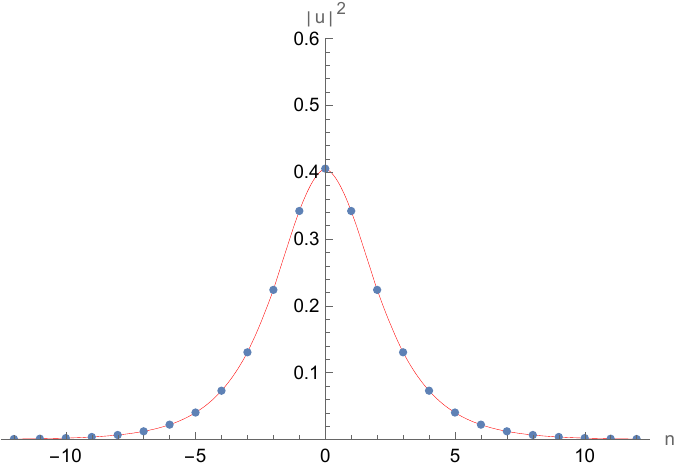}}}
\put(-20,-23){\resizebox{!}{4.0cm}{\includegraphics{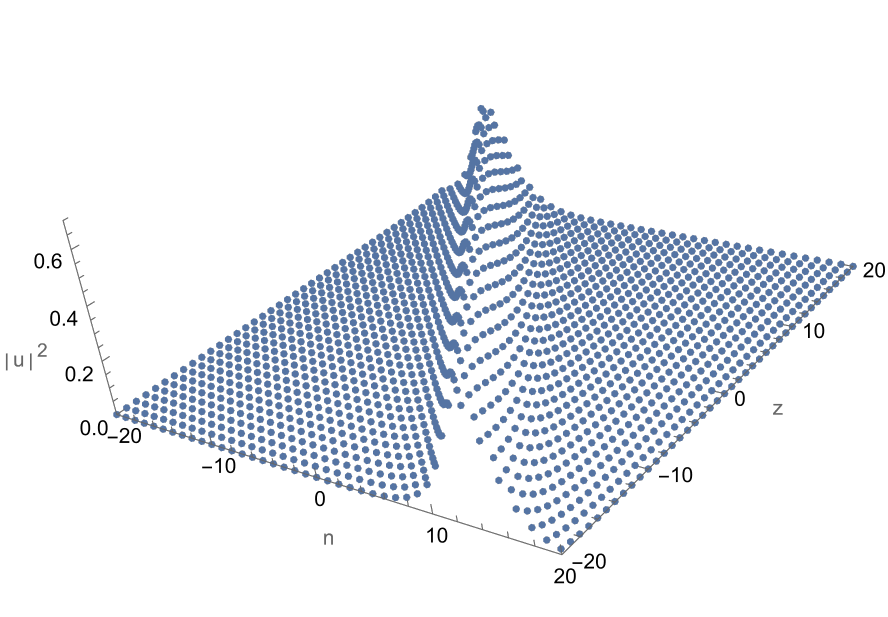}}}
\put(160,-23){\resizebox{!}{3.5cm}{\includegraphics{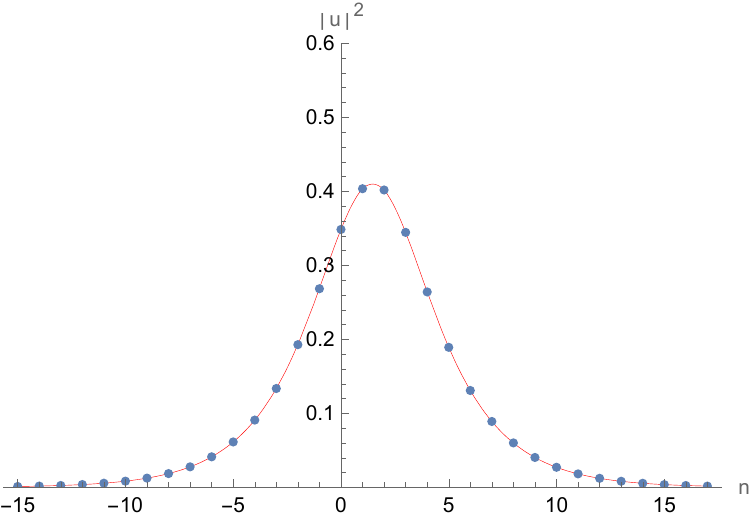}}}
\end{picture}
\end{center}
\vskip 20pt
\begin{center}
\begin{minipage}{15cm}{\footnotesize
\qquad\qquad\qquad(a)\qquad\qquad\qquad\qquad\qquad\qquad\qquad\qquad\quad(b)\qquad\qquad\qquad\qquad\qquad\qquad\qquad\quad (c) \\
{\bf Figure 5}. One-soliton solution $|u|^2$ given by \eqref{sdNLS-1ss-u-mo} with $c_1=d_1=1$:
(a) a stationary wave with $k_1=0.3+\pi i/2$;
(b) shape and movement with $k_1=0.2-0.4i$;
(c) 2D-plot of (b) at $z=-1$.}
\end{minipage}
\end{center}

In the case of  $N=1$, the two-soliton solution is of form \eqref{sdNLS-u-fg},
where $f$ and $g$ are defined by \eqref{dNLS-2ss-fg-xi} with $\xi_j\rightarrow \lambda_j,~(j=1,2)$.
Similar to the discrete case, here we still call the
asymptotic solitons as $k_1$-soliton and $k_2$-soliton, respectively. We denote $\lambda_j$ as $\lambda_j=\mathcal{C}_j+i\mathcal{D}_j,~(j=1,2)$
with
\begin{align}
\label{sdNLS-lambdaj}
\mathcal{C}_j=(a_jn-2z\sinh a_j\sin b_j)/2, \quad \mathcal{D}_j=b_jn/2+(\cosh a_j\cos b_j-1)z,
\end{align}
and assume $c_j=d_j=1$, $0<a_1<a_2$ and $0<b_1<b_2<\pi/2$ to guarantee $\mathcal{Y}=a_2\sinh a_1\sin b_1-a_1\sinh a_2 \sin b_2<0$. By the analogous
analysis above, for fixed $\lambda_1$, we determine the $k_1$-soliton
\begin{align}
\label{sdNLS-lambda-1-u}
u\simeq
\left\{
\begin{aligned}
& \dfrac{2e^{2\lambda_1+2i\text{Im}(k_1+k_2)}\sinh a_1\sinh(k^*_1-k^*_2)
\sinh(k_1+k^*_2)}{|\sinh(k_1-k_2)|^2+|\sinh(k_1+k^*_2)|^2 e^{4\text{Re}\lambda_1}},\quad z\rightarrow +\infty, \\
& \dfrac{2e^{2\lambda_1+2i\text{Im}(k_1+k_2)}\sinh a_1\sinh(k_1-k_2)
\sinh(k^*_1+k_2)}{|\sinh(k_1+k^*_2)|^2+|\sinh(k_1-k_2)|^2 e^{4\text{Re}\lambda_1}},\quad z\rightarrow -\infty,
\end{aligned}	
\right.
\end{align}
and the corresponding envelop
\begin{align}
\label{sdNLS-u-lambda1-mo}
|u|^2\simeq\dfrac{4\sinh^2a_1(\cosh a^-_{12}-\cos b^-_{12})(\cosh a^+_{12}-\cos b^-_{12})e^{-2a_1n+4z\sinh a_1\sin b_1}}
{(\cosh a^{\pm}_{12}-\cos b^-_{12}+(\cosh a^{\mp}_{12}-\cos b^-_{12})e^{-2a_1n+4z\sinh a_1\sin b_1})^2}, \quad z\rightarrow \pm\infty.
\end{align}
For fixed $\lambda_2$, the $k_2$-soliton
\begin{align}
\label{sdNLS-lambda2-u}
u\simeq
\left\{
\begin{aligned}
& -\dfrac{2e^{2\lambda_2+2i\text{Im}(k_1+k_2)}\sinh a_2 \sinh(k_1-k_2)\sinh(k_1+k^*_2)}
{|\sinh(k_1+k^*_2)|^2+e^{4\text{Re}\lambda_2}|\sinh(k_1-k_2)|^2}, \quad z\rightarrow +\infty, \\
& -\dfrac{2e^{2\lambda_2+2i\text{Im}(k_1+k_2)}\sinh a_2 \sinh(k^*_1-k^*_2)\sinh(k^*_1+k_2)}
{|\sinh(k_1-k_2)|^2+e^{4\text{Re}\lambda_2}|\sinh(k_1+k^*_2)|^2}, \quad z\rightarrow -\infty,
\end{aligned}	
\right.
\end{align}
and the wave package
\begin{align}
\label{sdNLS-lambda2-u-mo}
|u|^2\simeq\dfrac{4\sinh^2a_2(\cosh a^-_{12}-\cos b^-_{12})(\cosh a^+_{12}-\cos b^-_{12})e^{-2a_2n+4z\sinh a_2\sin b_2}}
{(\cosh a^{\mp}_{12}-\cos b^-_{12}+(\cosh a^{\pm}_{12}-\cos b^-_{12})e^{-2a_2n+4z\sinh a_2 \sin b_2})^2}, \quad z\rightarrow \pm\infty,
\end{align}
can be derived out. The $k_j$-soliton travels with amplitude $\sinh^2a_j$,
velocity $(2\sinh a_j\sin b_j)/a_j$, and trajectory
\begin{align}
n(z)=\frac{1}{2a_j}\left(4z\sinh a_j\sin b_j \mp (-1)^{j}
\ln\left(\frac{\cosh a^{-}_{12}-\cos b^{-}_{12}}{\cosh a^{+}_{12}-\cos b^{-}_{12}}\right)\right), \quad z\rightarrow \pm \infty.
\end{align}
The phase shift is described as
\begin{align}
\frac{\mp (-1)^{j}}{a_j}\ln\left(\frac{\cosh a^{-}_{12}-\cos b^{-}_{12}}{\cosh a^{+}_{12}-\cos b^{-}_{12}}\right).
\end{align}
We depict the this two-solution solution in Figure 6.
\vskip35pt
\begin{center}
\begin{picture}(120,80)
\put(-120,-23){\resizebox{!}{5.0cm}{\includegraphics{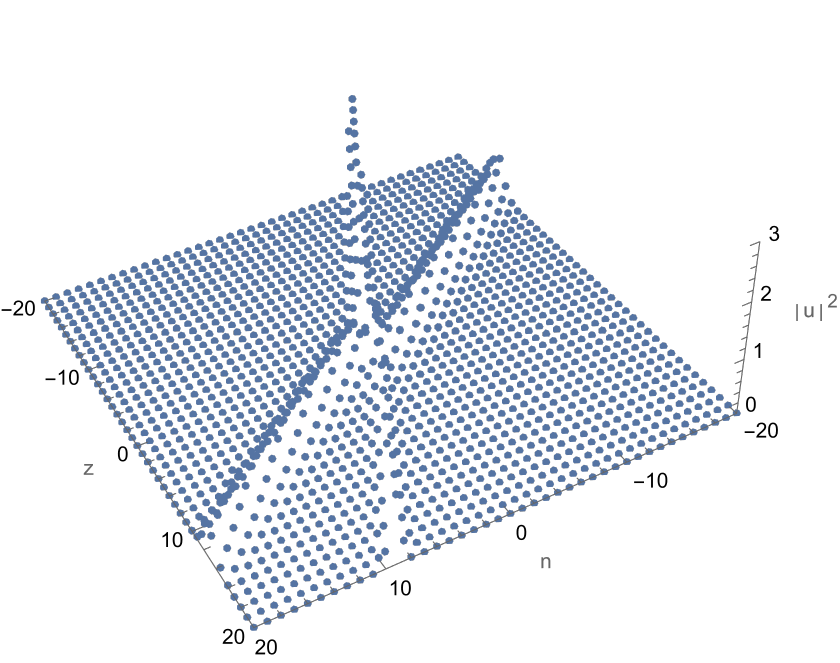}}}
\put(100,-23){\resizebox{!}{4cm}{\includegraphics{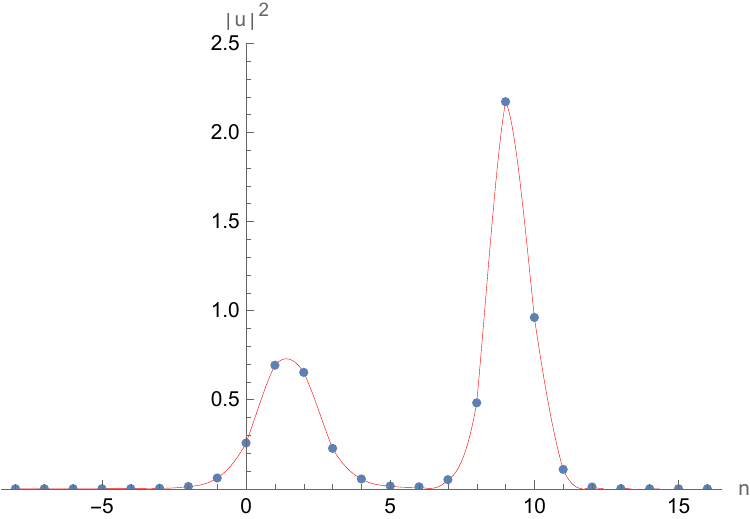}}}
\end{picture}
\end{center}
\vskip 20pt
\begin{center}
\begin{minipage}{15cm}{\footnotesize
\qquad\qquad\qquad\qquad\qquad\qquad\quad(a)\qquad\qquad\qquad\qquad\qquad\qquad\qquad\qquad\qquad\qquad (b) \\
{\bf Figure 6.} Two-soliton solution $|u|^2$ for the equation \eqref{sdNLS-z} with $k_1=0.4+1.4i$,
$k_2=0.6+1.1i$ and $c_1=d_1=c_2=d_2=1$:
(a) shape and movement; (b) 2D-plot of (a) at $z=4$.}
\end{minipage}
\end{center}

\noindent {\it Jordan-block solutions:}
To proceed, let $L$ be the Jordan-block matrix \eqref{dNLS-L-Jor}, the basic entries $\{\Phi_j\}$ are thereby of the form
\begin{align}
\label{sdNLS-JBS}
\Phi_{j}=\left\{
\begin{aligned}
&c_1\dfrac{\partial^{j-1}_{k_1}e^{\lambda_1}}{(j-1)!},\quad j=1, 2, \ldots, N+1,\\
&d_1\dfrac{\partial^{s-1}_{k_1^*}e^{-\lambda_1^*}}{(s-1)!}, \quad j=N+1+s, \quad s=1,2,\ldots,N+1.
\end{aligned}	
\right.
\end{align}
In this case, the simplest Jordan-block solution is still given by \eqref{sdNLS-u-fg}
and \eqref{dNLS-JBS-fg} up to a replacement of $\xi_1$ by $\lambda_1$.
The behavior of $|u|^2$ is depicted in Figure 7.
\vskip35pt
\begin{center}
\begin{picture}(120,80)
\put(-120,-23){\resizebox{!}{5.0cm}{\includegraphics{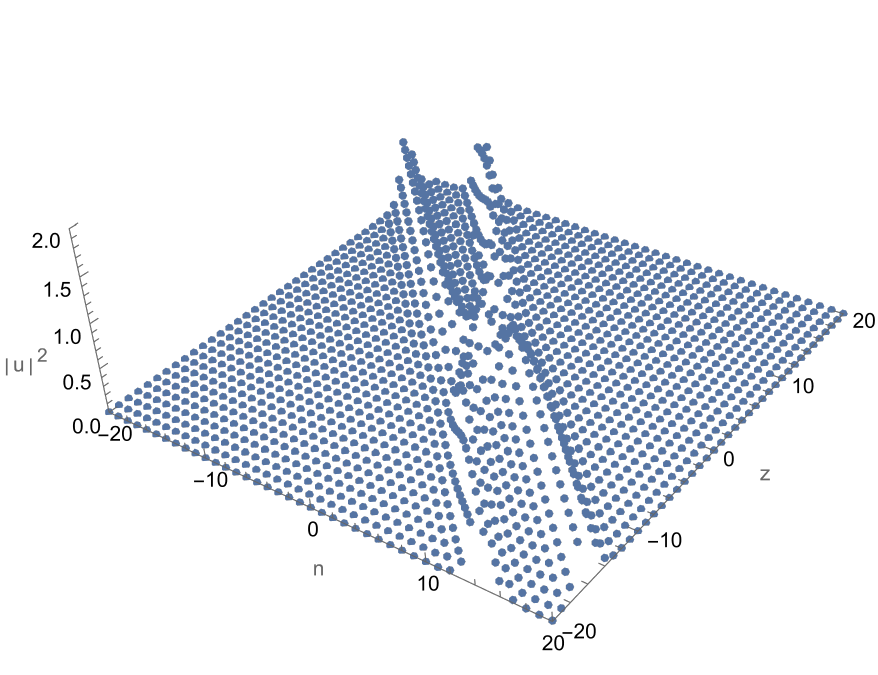}}}
\put(100,-23){\resizebox{!}{4.0cm}{\includegraphics{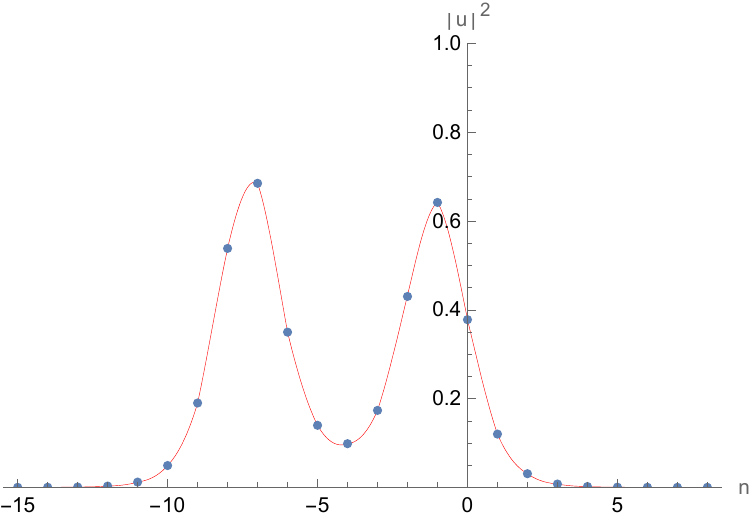}}}
\end{picture}
\end{center}
\vskip 20pt
\begin{center}
\begin{minipage}{15cm}{\footnotesize
\qquad\qquad\qquad\qquad\qquad\quad(a)\qquad\qquad\qquad\qquad\qquad\qquad\qquad\qquad\qquad\qquad\qquad\quad (b) \\
{\bf Figure 7.} Jordan-block solution $|u|^2$ given by \eqref{sdNLS-JBS} with $k_1=0.4-0.25i$ and $c_1=d_1=1$:
(a) shape and movement; (b) 2D-plot of (a) at $z=4$.}
\end{minipage}
\end{center}

\subsubsection{Full-continuous limit:} Now, we change the parameters $l_j=k_j/\epsilon$, $t=\epsilon^2z$
and perform the limit on the discrete variable $n$,
\begin{align}
\label{sdNLS-FCM-n}
n \rightarrow \infty, \quad \epsilon \rightarrow 0, \quad \text{s.t.} \quad x=n\epsilon \quad \text{fixed}.
\end{align}	
The plane wave functions \eqref{lambdaj-def} take the form of
\begin{align}
e^{\lambda_j} \rightarrow e^{l_jx+2il_j^2t}, \quad j=1,2,\ldots,N+1.
\end{align}	
Reinterpreting the variable $u$ as $u(n,z)\rightarrow\epsilon u(x,t)/\sqrt{2} $, we arrive at the NLS
equation \eqref{NLS}, whose solution is described as follows.
\begin{Thm}
The function \eqref{sdNLS-u-fg} with\footnote{For the basic
column vector $\Phi:=\Phi(x,t)$, $\wh{\Phi^{(N)}}$ is the sequence $\Phi^{(0)}, \Phi^{(1)},\ldots,\Phi^{(N)}$,
where $\Phi^{(j)}=\frac{\partial^j\Phi}{\partial x^j}$.}
\begin{align}
\label{NLS-solu}
f=|e^{-N\Ga}\wh{\Phi^{(N)}};e^{N\Ga}T\wh{\Phi^{*(-N)}}|,\quad
g=2\sqrt{2}|e^{-N\Ga}\widehat{\Phi^{(N+1)}};e^{N\Ga}T\wh{\Phi^{*(-N+1)}}|,
\end{align}
solves the equation \eqref{NLS}, where $\Phi=e^{\Ga x+2i\Ga^2t}C^{+}$
and $T$ is a constant matrix of order $2(N+1)$ satisfying matrix equations
\begin{align}
\label{cNLS-Ga-T}
\Ga T+ T\Ga^*=\bm 0,\quad TT^*=-I.
\end{align}
\end{Thm}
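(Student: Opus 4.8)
The plan is to obtain the statement as the full continuous limit of the semi-discrete solution recorded in the preceding theorem, using the scalings $l_j=k_j/\epsilon$, $t=\epsilon^2z$, $x=n\epsilon$ ($n\to\infty$, $\epsilon\to0$) of \eqref{sdNLS-FCM-n} together with the amplitude rescaling $u\mapsto\epsilon u/\sqrt2$. First I would substitute $\Ga\mapsto\epsilon\Ga$ (so that the spectrum $\{k_j\}$ of $\Ga$ is rescaled to $\{l_j\}$), $z=t/\epsilon^2$ and $n=x/\epsilon$ into the basic column vector $\Phi=e^{\Ga n+2iz\sinh^2\Ga}C^+$ of the semi-discrete theorem; since $\sinh(\epsilon\Ga)=\epsilon\Ga+O(\epsilon^3)$ one has $e^{\Ga n+2iz\sinh^2\Ga}=e^{\Ga x+2i\Ga^2t}\big(I+O(\epsilon^2)\big)$, so $\Phi\to e^{\Ga x+2i\Ga^2t}C^+$, which is exactly the vector in \eqref{NLS-solu}. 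The matrix equations $\Ga T+T\Ga^*=\bm 0$ and $TT^*=-I$ are homogeneous in $\Ga$ (the first) and independent of $\Ga$ (the second), so $T$, these constraints, and $C^-=TC^{+*}$ all survive the rescaling verbatim, yielding \eqref{cNLS-Ga-T}.

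The core of the argument is the degeneration of the Casorati determinants into Wronski determinants. In \eqref{sdNLS-z-solu} the $\Phi$-block consists of the columns $\Phi,E^2\Phi,\ldots,E^{2N}\Phi$, where $E$ shifts $n\mapsto n+1$, i.e.\ $x\mapsto x+\epsilon$. By elementary column operations, which leave the determinant unchanged, this block is converted into $\Phi,(E^2-I)\Phi,\ldots,(E^2-I)^N\Phi$, and $(E^2-I)^j\Phi=(2\epsilon)^j\big(\Phi^{(j)}+O(\epsilon)\big)$ with $\Phi^{(j)}=\partial_x^j\Phi$. Pulling the factor $(2\epsilon)^j$ out of the $(j{+}1)$-st column, and performing the analogous reduction on the negatively shifted conjugate block, shows that $f$ and $g$ converge, after division by fixed powers of $\epsilon$, to the derivative-Wronskians of \eqref{NLS-solu} --- up to the constant normalization $C^\pm\mapsto e^{\mp N\Ga}C^\pm$ (the continuum counterpart of the step taken before Theorem~3, which reinstates the factors $e^{\mp N\Ga}$). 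The difference between the $\epsilon$-powers extracted from $f$ and from $g$ is exactly compensated by the rescaling $u\mapsto\epsilon u/\sqrt2$, which is where the constant $2\sqrt2$ multiplying $g$ in \eqref{NLS-solu} originates. Since all matrix entries involved depend smoothly on $(x,t)$, $u=g/f$ converges together with its $x$- and $t$-derivatives.

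It then remains to confirm that \eqref{sdNLS-z} itself degenerates to \eqref{NLS} under these scalings, which is already asserted in the text preceding the statement: inserting $\wt u+\dt u=2u+\epsilon^2u_{xx}+O(\epsilon^4)$, $\partial_z=\epsilon^2\partial_t$ and $u\mapsto\epsilon u/\sqrt2$ into \eqref{sdNLS-z} and collecting the leading-order terms gives $iu_t+u_{xx}+|u|^2u=0$. Because the preceding theorem guarantees that the pre-limit $u$ solves \eqref{sdNLS-z} for every small $\epsilon>0$, letting $\epsilon\to0$ shows that the limiting $u(x,t)$ solves \eqref{NLS}.

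A self-contained alternative that avoids interchanging the limit with differentiation is to verify the statement directly by bilinearization: set $u=g/f$ with $f$ real, reduce \eqref{NLS} to its Hirota bilinear form $(iD_t+D_x^2)\,g\cdot f=0$, $D_x^2\,f\cdot f=2gg^*$, establish the continuum analogue of Theorem~1 (the triple Wronskians $f,g,h$ solve the continuous bilinear AKNS system provided $\Phi$, $\Psi$ satisfy suitable linear conditions, e.g.\ $\Phi_x=\Ga\Phi$, $\Phi_t=2i\Ga^2\Phi$, $\Psi_x=-\Ga\Psi$, $\Psi_t=-2i\Ga^2\Psi$) by the Freeman--Nimmo Wronskian technique, and then impose the reduction $\Psi=T\Phi^*$ with $\Ga T+T\Ga^*=\bm 0$, $TT^*=-I$, which forces $f^*=f$ and $h=g^*$ and collapses the AKNS pair to \eqref{NLS}. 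In either route the main obstacle is bookkeeping rather than a new idea: in the limit route, tracking the powers of $\epsilon$ produced by the column reductions and checking that they cancel between numerator and denominator (and reproduce the factor $2\sqrt2$); in the direct route, the Laplace-expansion identities for derivative-Wronskians that underpin the Freeman--Nimmo argument.
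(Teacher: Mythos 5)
Your proposal follows essentially the same route as the paper: the paper obtains this theorem precisely by the full continuum limit $l_j=k_j/\epsilon$, $t=\epsilon^2 z$, $x=n\epsilon$, $u\to\epsilon u/\sqrt2$ applied to the semi-discrete double Casoratian solutions, stating the result after noting $e^{\lambda_j}\to e^{l_jx+2il_j^2t}$ and giving no further proof. Your filling in of the column-operation degeneration of Casoratians into Wronskians, the $\epsilon$-power bookkeeping that produces the factor $2\sqrt2$, and the survival of the constraints on $T$ is correct and consistent with (indeed more detailed than) the paper's argument.
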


\section{ndNLS equation \eqref{ndNLS}: solutions and continuum limits}

In this part, we will discuss solutions and continuum limits of the
ndNLS equation \eqref{ndNLS}, which is reduced from the dAKNS(2) equation \eqref{dAKNS2}
by imposing reduction $(v,w)=(u_{-n,-m},\wh{w}_{-n,-m})$.
In what follows, we appoint that
for a function $\hbar=\hbar_{x_1,x_2}$, notation $\hbar_{-1}$ means the reverse space-time of $\hbar$, i.e.,
$\hbar_{-1}=\hbar_{-x_1,-x_2}$. In the discrete case, $\wt{\hbar}_{-1}=\hbar_{-n-1,-m}$,
$\wh{\hbar}_{-1}=\hbar_{-n,-m-1}$ and $\wh{\dt{\hbar}}_{-1}=\hbar_{-n+1,-m-1}$.

\subsection{Solutions to equation \eqref{ndNLS}}

Following the treatment in the above section, we still take
$M=N$ and replace $C^{\pm}$ by $e^{\mp N\Ga}C^{\pm}$ in \eqref{dAKNS2-dCs}.
We then have the following result.
\begin{Thm}
\label{Thm-ndNLS-solu}
The functions \eqref{dNLS-uw-fg} with
\begin{align}
\label{ndNLS-solu}
f=|e^{-N\Ga}\widehat{\Phi^{(N)}};e^{N\Ga}T\widehat{\Phi_{-1}^{(-N)}}|,\quad
g=|e^{-N\Ga}\widehat{\Phi^{(N+1)}};e^{N\Ga}T\widehat{\Phi_{-1}^{(-N+1)}}|,
\end{align}
solve the equation \eqref{ndNLS}, where the $2(N+1)$-th order column vector $\Phi$ is defined by
\eqref{dAKNS2-Phi-Ga} and $T\in\mathbb{C}^{(2N+2)\times(2N+2)}$ is a constant matrix satisfying
 matrix equations
\begin{align}
\label{ndNLS-AT}
\Ga T-T\Ga=\bm 0,\quad T^{2}=|e^{\Ga}|^{2}I.
\end{align}
\end{Thm}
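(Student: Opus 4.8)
The plan is to mimic the reduction argument already used for the dNLS equation, but now adapted to the nonlocal constraint $(v,w)=(u_{-n,-m},\wh{w}_{-n,-m})$. The starting point is Theorem 1: the double Casoratian triple $(f,g,h)$ built from $\Phi,\Psi$ obeying the CES \eqref{dAKNS2-CES} solves the bilinear system \eqref{dAKNS2-bili}, and via \eqref{dAKNS2-uvw-tran} yields a solution of dAKNS(2). After the shift $C^{\pm}\mapsto e^{\mp N\Ga}C^{\pm}$ and setting $M=N$, the Casoratians take the symmetric form in \eqref{dAKNS2-dCs}. The task is to choose $\Psi$ in terms of $\Phi$ so that the nonlocal reduction $v=u_{-n,-m}$, $w=\wh w_{-n,-m}$ holds, i.e. so that $h/f=(g/f)_{-1}$ and the $w$-relation is respected. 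First I would posit the ansatz $\Psi=T\Phi_{-1}$ for a constant matrix $T$, where $\Phi_{-1}$ denotes $\Phi$ evaluated at $(-n,-m)$, and compute $\Phi_{-1}$ explicitly from \eqref{dAKNS2-Phi-Ga}. Since reversing $n\mapsto -n$ turns $e^{\Ga n}$ into $e^{-\Ga n}$ and reversing $m\mapsto -m$ inverts the $m$-th power factor, one gets
\begin{align*}
\Phi_{-1}=e^{-\Ga n}\bigl[(2\delta e^{\Ga}\sinh\Ga-iI)(2\delta e^{-\Ga}\sinh\Ga-iI)^{-1}\bigr]^{-\frac{m}{2}}C^{+},
\end{align*}
which has exactly the structural shape of the $\Psi$ in \eqref{dAKNS2-Phsi-Ga} — the $n$- and $m$-dependence match up to the choice of constant vector. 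Hence $\Psi=T\Phi_{-1}$ automatically satisfies the $\Psi$-part of the CES \eqref{dAKNS2-CES} provided $T$ commutes appropriately with the relevant matrix functions of $\Ga$; this is where the condition $\Ga T-T\Ga=\bm 0$ enters, guaranteeing $T$ commutes with $e^{\Ga n}$ and with any power series in $\Ga$.

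Next I would impose the two genuine reduction requirements. Verifying $v=u_{-n,-m}$ amounts to showing $h=$ (a nonzero scalar multiple of) $f_{-1}$ and $g_{-1}=$ (the same multiple of) $h$-type object, so that the ratios collapse correctly; concretely one computes $f(\Phi,\Psi)_{-1}$ and $g(\Phi,\Psi)_{-1}$ by reversing space-time inside each Casoratian column, uses $\Psi_{-1}=T\Phi$ (which follows from $\Psi=T\Phi_{-1}$ since $(\Phi_{-1})_{-1}=\Phi$), re-expresses the reversed Casoratians in terms of $\Phi$ and $T\Phi$, and then column operations (pulling out the matrix $T$, swapping the two Casoratian blocks, reversing the order of columns within a block which produces a sign $(-1)^{\binom{N+1}{2}}$-type factor) bring it back to the original $f,g,h$. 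The bookkeeping of these determinantal manipulations forces the algebraic constraint on $T$: the block-swap introduces $\det T$ and the need for consistency under doing the reversal twice forces $T^{2}=|e^{\Ga}|^{2}I$ — the factor $|e^{\Ga}|^{2}$ arising precisely from the $e^{\pm\Ga n}$ prefactors picked up when identifying $\Phi_{-1}$ evaluated at the shifted Casoratian columns $E^{2j}$ against $\Phi$. Finally the $w$-relation $w=\wh w_{-n,-m}$ should follow for free once $f$ has the right reversal symmetry, since $w$ is the ratio $\wh{\dt f}f/(\wh f\dt f)$ in \eqref{dNLS-uw-fg} and reversing space-time swaps $\wh{\phantom{a}}\leftrightarrow\dt{\phantom{a}}$ up to the shift, exactly reproducing $\wh w_{-n,-m}$.

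The main obstacle I anticipate is the determinant bookkeeping in the second step: carefully tracking the signs and scalar factors that appear when (i) reversing space-time inside a Casoratian changes each shifted column $E^{2j}\Phi$ into $E^{-2j}\Phi_{-1}$, (ii) the two blocks $\wh{\Phi^{(N)}}$ and $T\wh{\Phi_{-1}^{(-N)}}$ get interchanged, and (iii) columns within a block are reordered. Getting these to cancel is exactly what pins down the two matrix equations in \eqref{ndNLS-AT}, so the constraints are not assumptions of convenience but forced consequences — the delicate part is confirming that with $\Ga T-T\Ga=\bm0$ and $T^{2}=|e^{\Ga}|^{2}I$ everything closes up with no residual factor. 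I would also need to check the closure condition analogous to $C^{-}=TC^{+^{*}}$ in the dNLS case, namely that the constant vector data is consistent, and confirm that the consistency of \eqref{ndNLS-AT} (the existence of such $T$ for a given Jordan $\Ga$) is nonempty — this is addressed by the subsequent explicit block choice of $\Ga$ and $T$, paralleling \eqref{dNLS-LT-form}.
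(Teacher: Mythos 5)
Your route is the bilinearization--reduction argument the paper itself relies on (it gives no explicit proof, referring to the treatment of Section 3 and to the scheme of \cite{DLZ-AMC,XZS-SAPM}): impose $\Psi=T\Phi_{-1}$, note that $\Phi_{-1}$ has exactly the functional form of $\Psi$ in \eqref{dAKNS2-Phsi-Ga} with $C^-=TC^+$ so that $[\Ga,T]=\bm 0$ makes the CES hold and Theorem~1 applies, and then reduce the verification of \eqref{ndNLS} to determinantal identities relating the space--time reversed Casoratians to the original ones. That is the right skeleton, and the column bookkeeping you anticipate does close: writing $\Phi=e^{\Ga n}\Omega^m C^+$, reversal sends the $\Phi$-block of $f$ onto $T$ times the (reversed) $\Psi$-block and vice versa, giving $f_{-1}=c\,f$ with $c=(-1)^{N+1}|T|\,\mu^{-(N+1)}$ if $T^2=\mu I$; the same manipulation for $g_{-1}$ lands on the blocks of $h$ only after an overall column factor $e^{2\Ga}$ (the shifts of $g_{-1}$ are those of $h$ displaced by one $E^2$), so $g_{-1}=c'h$ with $c'=(-1)^{N+1}|T|\,|e^{\Ga}|^2\mu^{-(N+2)}$. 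Demanding $c'=c$, which is what $v=h/f=(g/f)_{-1}$ needs, forces $\mu=|e^{\Ga}|^2$, i.e.\ \eqref{ndNLS-AT}. So the constraint comes from this relative $e^{2\Ga}$ mismatch and the matching of the two proportionality constants, not from ``doing the reversal twice'' (reversing twice is the identity and yields no condition); your parenthetical about the $e^{\pm\Ga n}$ prefactors is the correct mechanism, the other phrasing is not.

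The one genuine misstep is your treatment of the $w$-constraint. First, it is not needed for the theorem as stated: equations \eqref{ndNLS} are precisely \eqref{dAKNS2}(a) and (c) with $v$ replaced by $u_{-n,-m}$, so once $(u,v,w)$ solves \eqref{dAKNS2} (Theorem~1 plus the CES check) and $h/f=(g/f)_{-1}$, you are done; the relation $w=\wh{w}_{-n,-m}$ only enters the consistency of the reduction at the level of the system, not the verification of \eqref{ndNLS}. Second, it does not ``follow for free'' from $f_{-1}=c\,f$: with the paper's convention $\wh{w}_{-1}=w_{-n,-m-1}$ and $w=\wh{\dt f}f/(\wh f\dt f)$, substituting $f_{-1}=c\,f$ turns $w=\wh{w}_{-1}$ into $w=\wt{w}$, which is false for generic solutions (already for the one-soliton $f=-2\cosh(\xi_1-\theta_1)$). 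So you should simply delete that step rather than lean on it; retaining it as an unproved claim would leave a hole in an otherwise sound argument. Finally, no analogue of $C^-=TC^{+*}$ needs to be imposed here: the ansatz $\Psi=T\Phi_{-1}$ builds $C^-=TC^+$ in automatically, and existence of admissible $(\Ga,T)$ is settled by the explicit block choice \eqref{ndNLS-Ga-T-ex}.
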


\noindent{\bf Remark 3.} {\it The matrix equations listed in \eqref{ndNLS-AT} are same as the constraints given in
\cite{DLZ-AMC} (Eq. (30)) to discuss double Casoratian solutions for a nonlocal semi-discrete NLS equation. There are three choices of the
matrix $\Ga=\text{Diag}(L,\Lambda)$ with $\Lambda=-L$ or $\Lambda=-L^*$ or $\Lambda=L^*$. For simplicity, we just
consider the first case to show the soliton solutions and Jordan-block solutions of the ndNLS equation \eqref{ndNLS}.}

We take $\Ga$ and $T$ as
\begin{align}
\label{ndNLS-Ga-T-ex}
\Ga=\left(
\begin{array}{cc}
L & \bm 0  \\
\bm 0 & -L
\end{array}\right),\quad
T=\left(
\begin{array}{cc}
I & \bm 0  \\
\bm 0 & -I
\end{array}\right),
\end{align}	
where $L$ is a Jordan canonical matrix.
Due to the block structure of matrix $T$, we note that $C^{+}$ can be
gauged to be $\bar{I}=(1,1,\dots,1;1,1,\dots,1)^{\st}$ or $\breve{I}=(1,0,\dots,0;1,0,\dots,0)^{\st}$.
It implies that the solutions of equation \eqref{ndNLS} are independent of phase parameter $C^{+}$,
i.e., the initial phase has always to be $0$. Besides the notation $\xi_j$ introduced in \eqref{xi}, we need another one
\begin{align}
\label{theta}
\theta_s=-k_sn+\iota_s m, \quad e^{\iota_s}=\bigg(
\dfrac{i+2\delta e^{-k_s}\sinh k_s}{i+2\delta e^{k_s}\sinh k_s}\bigg)^{\frac{1}{2}}, \quad s=1,2,\ldots,N+1.
\end{align}	

\subsubsection{Soliton solutions}
With the diagonal matrix $L$ \eqref{L-diag} and $C^{+}=\bar{I}$, $\Phi$ is thereby formulated as
\begin{align}
\label{ndNLS-Phij}
\Phi_{j}=\left\{
\begin{aligned}
& e^{\xi_j},\quad j=1, 2, \ldots, N+1, \\
& e^{\theta_s},\quad j=N+1+s,\quad s=1,2,\ldots,N+1.
\end{aligned}	
\right.
\end{align}

In the case of $N=0$, we have one-soliton solution
\begin{subequations}
\label{ndNLS-1ss}
\begin{align}
\label{ndNLS-1ss-u}
& u=e^{\xi_1+\theta_1}\sinh (2k_{1})\sech (\xi_1-\theta_1), \\
& w=\wh{\dt{f}}f/(\wh{f}\dt{f}), \quad f=-2\cosh(\xi_1-\theta_1).
\end{align}
\end{subequations}
To clarify one-soliton dynamics, we still take $k_1=(a_1+ib_1)/2$ and
introduce the following variables
\begin{subequations}
\label{ndNLS-nota}
\begin{align}
\label{Xa3}
& \d{X}_{a_j}:=\d{X}_{a_j}^{b_j}=1+\delta e^{a_j}\sin b_j,\quad \bar{Y}_{a_j}=-(\bar{X}_{a_j})^2-\cd{X}_{a_j}\d{X}_{a_j},\quad
\cd{Y}_{a_j}=\bar{X}_{a_j}(\cd{X}_{a_j}-\d{X}_{a_j}), \\
& Z_1=\bar{Y}_{a_1}\bar{Y}_{-a_1}-\cd{Y}_{a_1}\cd{Y}_{-a_1},\quad Z_2=\bar{Y}_{a_1}\cd{Y}_{-a_1}+\cd{Y}_{a_1}\bar{Y}_{-a_1}, \\
& \alpha=\arcsin(2Z_1Z_2/(Z_1^2+Z_2^2)),\quad \beta_{a_1}=e^{a_1n}((\bar{Y}_{-a_1})^2+(\cd{Y}_{-a_1})^2)^{-\frac{m}{2}},\\
& \beta_{-a_1}=\beta_{a_1}|_{a_1\rightarrow -a_1},\quad
\gamma=\frac{((\bar{X}_{a_1})^2+(\cd{X}_{a_1})^2)((\bar{X}_{-a_1})^2+(\d{X}_{-a_1})^2)}{((\bar{X}_{a_1})^2
+(\d{X}_{a_1})^2)((\bar{X}_{-a_1})^2+(\cd{X}_{-a_1})^2)},
\end{align}
\end{subequations}
where $\bar{X}_{a_1}$ and $\cd{X}_{a_1}$ are defined by \eqref{dNLS-Xa12} and
$\bar{Y}_{-a_1},\cd{Y}_{-a_1},\d{X}_{-a_1}$ are the results determined by replacing $a_1$ in
\eqref{Xa3} by $-a_1$. The corresponding envelop is
\begin{align}
\label{ndNLS-1ss-mo}
|u|^2=\frac{2(\cosh2a_1-\cos2b_1)(\gamma/(Z_1^2+Z_2^2))^{\frac{m}{2}}}
{\beta_{a_1}^2+\beta_{-a_1}^2+2\beta_{a_1}\beta_{-a_1}\cos(2b_1n-\alpha m/2)}.
\end{align}
When $a_1=0$ and $\beta_{a_1}=\beta_{-a_1}$, wave \eqref{ndNLS-1ss-mo} has singularities along points
\begin{align}
\label{ndNLS-1ss-velo}
n(m)=(2l+1)\pi/(2b_1)+\alpha m/(4b_1), \quad l \in \mathbb{Z}.
\end{align}	
In this situation, the carrier wave \eqref{ndNLS-1ss-mo} is oscillatory due to the cosine in the denominator.
We depict such a wave in Figure 8.
\vskip20pt
\begin{center}
\begin{picture}(120,80)
\put(-120,-23){\resizebox{!}{4.0cm}{\includegraphics{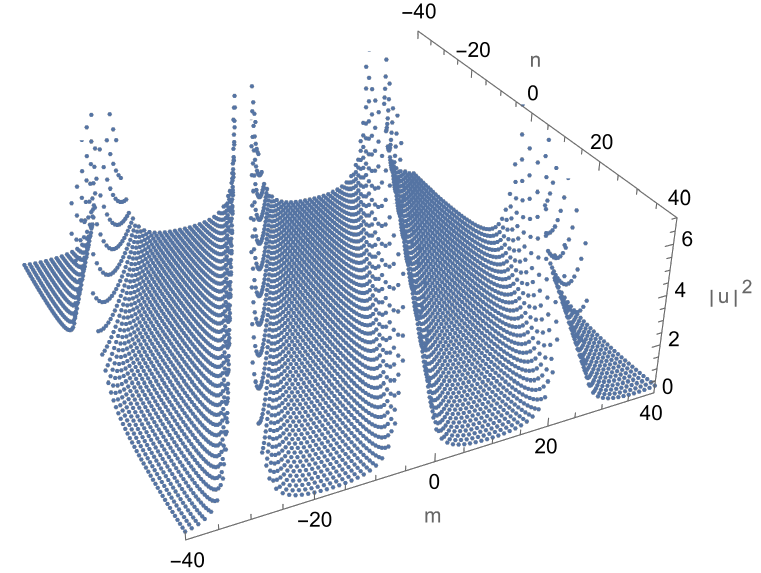}}}
\put(100,-23){\resizebox{!}{3.5cm}{\includegraphics{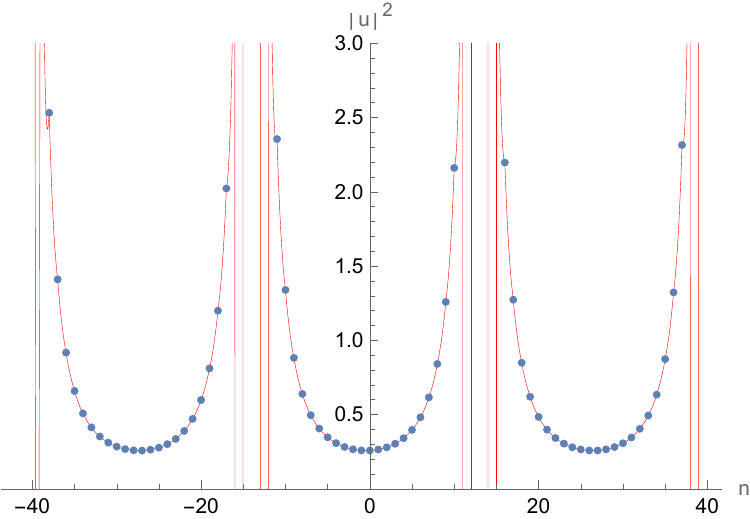}}}
\end{picture}
\end{center}
\vskip 20pt
\begin{center}
\begin{minipage}{15cm}{\footnotesize
\qquad\qquad\qquad\qquad\qquad\qquad\quad(a)\quad\qquad\qquad\qquad\qquad\qquad\qquad\qquad\quad\qquad\qquad(b) \\
{\bf Figure 8}. One-soliton solution $|u|^2$ given by \eqref{ndNLS-1ss-mo} with $\delta=1$ and $k_1=3.2i$:
(a) shape and movement; (b) 2D-plot of (a) at $m=1$.}
\end{minipage}
\end{center}
When $a_1\neq0$, it is evident that wave \eqref{ndNLS-1ss-mo} without singularity reaches its extrema along points $(n,m)$ with
\begin{align}
\label{ndNLS-1ss-a-neq0}
a_1(\beta_{a_1}^2-\beta_{-a_1}^2)-2b_1\beta_{a_1}\beta_{-a_1}\sin(2b_1n-\alpha m/2)=0.
\end{align}	
Then the velocity of \eqref{ndNLS-1ss-mo} is $n'(m)=X/Y$ associated with
\begin{subequations}
\begin{align}
X=& -b_1\beta_{a_1}\beta_{-a_1}\big(\sin(2b_1n-\alpha m/2)
\ln(\beta_{a_1}\beta_{-a_1})/m+\alpha\cos(2b_1n-\alpha m/2)\big) \nn \\
&+2a_1\big(\beta_{a_1}^2(\ln \beta_{a_1}-a_1n)-\beta_{-a_1}^2(\ln \beta_{-a_1}+a_1n)\big)/m,\\
Y=&2a_1^2(\beta_{a_1}^2+\beta_{-a_1}^2)-4b_1^2\beta_{a_1}\beta_{-a_1}\cos(2b_1n-\alpha m/2).
\end{align}	
\end{subequations}
When $a_1$ is near to zero, term $e^{2a_1n}$ or $e^{-2a_1n}$ does not play dominated role and
oscillatory phenomenon would be arisen since the cosine function in the denominator of \eqref{ndNLS-1ss-mo} (see Figure 9(a)).
When $a_1$ is not too small, the effect brought by the cosine function can be neglected and
the soliton appears as the usual bell form (see Figures 9(b) and 9(c)).
\vskip35pt
\begin{center}
\begin{picture}(120,80)
\put(-160,-23){\resizebox{!}{3.5cm}{\includegraphics{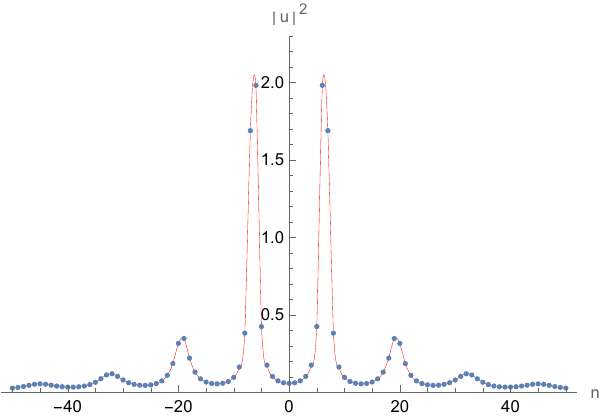}}}
\put(-30,-23){\resizebox{!}{4.0cm}{\includegraphics{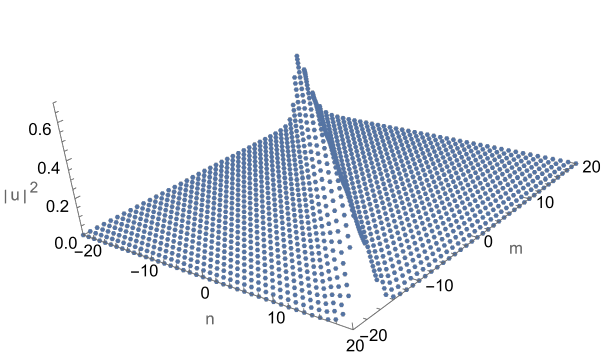}}}
\put(160,-23){\resizebox{!}{3.5cm}{\includegraphics{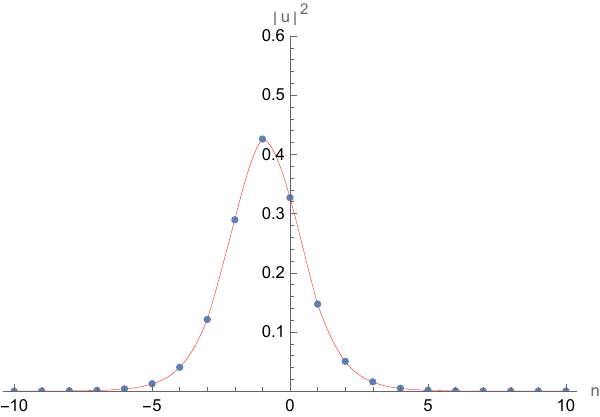}}}
\end{picture}
\end{center}
\vskip 20pt
\begin{center}
\begin{minipage}{15cm}{\footnotesize
\qquad\qquad\qquad(a)\qquad\qquad\qquad\qquad\qquad\qquad\qquad\qquad\quad(b)\qquad\qquad\qquad\qquad\qquad\qquad\qquad\quad (c) \\
{\bf Figure 9.} One-soliton solution $|u|^2$ given by \eqref{ndNLS-1ss-mo} with $\delta=-1.5$:
(a) Oscillatory soliton solution with $k_1=0.01+1.45i$ at $m=0$;
(b) Bell soliton solution with $k_1=0.3+1.5i$; (c) 2D-plot of (b) at $m=1$.}
\end{minipage}
\end{center}

In the case of $N=1$, we get two-soliton solution \eqref{dNLS-uw-fg} with
\begin{subequations}
\label{ndNLS-fg-2ss}
\begin{align}
f& =8(\cosh2k_1\cosh2k_2-1)\cosh(\xi_1-\theta_1)\cosh(\xi_2-\theta_2)-4\sinh2k_1\sinh2k_2\notag\\
&\quad \cdot[\cosh(\xi_1+\xi_2-\theta_1-\theta_2)+2\cosh(\xi_1-\xi_2+\theta_1-\theta_2)-\cosh(\xi_1-\xi_2-\theta_1+\theta_2)],\\
g& =4e^{-\xi_1-\xi_2-\theta_1-\theta_2}(\cosh2k_1-\cosh2k_2)\notag\\
&\quad \cdot[e^{2(\xi_1+\theta_1)}(e^{2\xi_2}+e^{2\theta_2})\sinh2k_1-e^{2(\xi_2+\theta_2)}(e^{2\xi_1}+e^{2\theta_1})\sinh2k_2].
\end{align}
\end{subequations}

We here present the interaction of two-soliton. Besides the decomposition of $\xi_j$ introduced in \eqref{dNLS-xij},
we rewrite $\theta_j,~(j=1,2)$ as
\begin{align}
\label{thetaj}
\theta_j=\mathcal{P}_j+i\mathcal{Q}_j, \quad \text{with} \quad
\mathcal{P}_j=(-2a_jn+m\ln\varsigma_j)/4 \quad \text{and} \quad \mathcal{Q}_j=(-b_jn+m\arg e^{2\iota_j})/2,
\end{align}
in which
\begin{subequations}
\begin{align}
\label{varsigma}
& \varsigma_j=\frac{\delta^2{(e^{-a_j}\cos{b_j}-1)}^2+{(\delta e^{-a_j}\sin b_j+1)}^2}
{\delta^2{(e^{a_j}\cos{b_j}-1)}^2+{(\delta e^{a_j}\sin{b_j}+1)}^2}, \\
& \arg e^{2\iota_j}=\arctan\frac{\bar{X}_{a_j}\d{X}_{-a_j}+\bar{X}_{-a_j}\d{X}_{a_j}}{\bar{X}_{a_j}\bar{X}_{-a_j}-\d{X}_{a_j}\d{X}_{-a_j}},
\end{align}
\end{subequations}
where $\bar{X}_{a_j}$ and $\d{X}_{a_j}$ are defined by \eqref{dNLS-Xa12} and \eqref{Xa3}, respectively.
We find that $e^{\theta_j}\rightarrow \infty$ as
$\mathcal{P}_j\rightarrow +\infty$ and $e^{\theta_j}\rightarrow 0$ as
$\mathcal{P}_j\rightarrow -\infty$. Without loss of generality, we assume $0<a_1<a_2$ and $0<b_1<b_2<\pi/2$ to
guarantee $\mathcal{S}=a_2\ln\varsigma_1-a_1\ln\varsigma_2>0$. For fixed $\theta_1$, from \eqref{thetaj} we deduce that
\begin{align}
\label{theta2-1}
\theta_2=a_2\theta_1/a_1-m\mathcal{S}/(4a_1)+i(a_1\mathcal{Q}_2-a_2\mathcal{Q}_1)/a_1,
\end{align}
which satisfies $e^{\theta_2}\rightarrow 0$ as
$m\rightarrow +\infty$ and $e^{\theta_2}\rightarrow \infty$ as $m\rightarrow -\infty$.
Now, fixing $\xi_1$ again and since
\begin{align}
\label{xij-thetaj}
\xi_j+\theta_j=\ln(\rho_j\varsigma_j)m/4+i(\mathcal{B}_j+\mathcal{Q}_j),
\end{align}
then we obtain
\begin{align}
\label{xi2-theta2}
\xi_2+\theta_2=[(\xi_1+\theta_1)\ln(\rho_2\varsigma_2)+i(\ln(\rho_1\varsigma_1)
(\mathcal{B}_2+\mathcal{Q}_2)-\ln(\rho_2\varsigma_2)(\mathcal{B}_1+\mathcal{Q}_1))]/\ln(\rho_1\varsigma_1),
\end{align}
from which we recognize that the modulus of $e^{\xi_2+\theta_2}$ is fixed, but the argument varies by
the independent variable $m$. After neglecting subdominant exponential terms, it follows from equations \eqref{ndNLS-fg-2ss} that
\begin{subequations}
\label{xi2-theta2-1-fg}
\begin{align}
&f\simeq4e^{-\xi_1\pm\xi_2-\theta_1\mp\theta_2}[e^{2\xi_1}\sinh^2(k_1\mp k_2)+e^{2\theta_1}\sinh^2(k_1\pm k_2)], \quad m \rightarrow \pm \infty,\\
&g\simeq4e^{\xi_1\pm \xi_2+\theta_1\mp\theta_2}(\cosh2k_1-\cosh2k_2)\sinh2k_1, \quad m \rightarrow \pm \infty,
\end{align}	
\end{subequations}
and hence the $k_1$-soliton appears
\begin{align}
\label{xi2-theta2-1-u}
u\simeq\dfrac{e^{2(\xi_1+\theta_1)}(\cosh2k_1-\cosh2k_2)\sinh2k_1}
{e^{2\xi_1}\sinh^2(k_1\mp k_2)+e^{2\theta_1}\sinh^2(k_1\pm k_2)}, \quad m \rightarrow \pm \infty,
\end{align}
the corresponding envelop reads
\begin{align}
\label{xi2-theta2-mo-1-u}
 |u|^2\simeq G_1/F^{\pm}_1, \quad m \rightarrow \pm \infty,
\end{align}
where
\begin{subequations}
\label{G1-F1}
\begin{align}
G_1=& (\rho_1\varsigma_1)^m(\cosh2a_1-\cosh2b_1)(\cosh2a_1+\cosh2a_2+\cos2b_1+\cos2b_2 \nn\\
&-4\cosh a_1\cosh a_2\cos b_1\cos b_2-4\sinh a_1\sinh a_2\sin b_1\sin b_2), \quad m \rightarrow \pm \infty,\\
F^{\pm}_1=&e^{2a_1n}\rho^m_1(\cosh a^{\mp}_{12}-\cos b^{\mp}_{12})^2+e^{-2a_1n}\varsigma^m_1(\cosh a^{\pm}_{12}-\cos b^{\pm}_{12})^2 \nn \\
&+2(\rho_1\varsigma_1)^\frac{m}{2}\big[\cos2(\mathcal{Q}_1-\mathcal{B}_1)\big((\cosh a_1\cos b_2-\cosh a_2\cos b_1)^2 \nn \\
&-(\sinh a_1\sin b_2-\sinh a_2\sin b_1)^2\big)-2\sin2(\mathcal{Q}_1-\mathcal{B}_1) \nn \\
&(\sinh a_1\sin b_2-\sinh a_2\sin b_1)(\cosh a_1\cos b_2-\cosh a_2\cos b_1)\big], \quad m \rightarrow \pm \infty.
\end{align}	
\end{subequations}
To continue, keeping $\xi_2$ and $\theta_2$ fixed, then by the relation
\begin{align}
\label{xi1-theta1}
\xi_1+\theta_1=[(\xi_2+\theta_2)\ln(\rho_1\varsigma_1)+i(\ln(\rho_2\varsigma_2)
(\mathcal{B}_1+\mathcal{Q}_1)-\ln(\rho_1\varsigma_1)(\mathcal{B}_2+\mathcal{Q}_2))]/\ln(\rho_2\varsigma_2),
\end{align}
we obtain that as $m \rightarrow \pm \infty$, functions $f$ and $g$ behave asymptotically
\begin{subequations}
\label{xi1-theta1-1-fg}
\begin{align}
&f\simeq4e^{\mp\xi_1-\xi_2\pm\theta_1-\theta_2}
(e^{2\xi_2}\sinh^2(k_1\pm k_2)+e^{2\theta_2}\sinh^2(k_1\mp k_2)), \quad m \rightarrow \pm \infty,\\
&g\simeq-4e^{\mp\xi_1+\xi_2\pm\theta_1+\theta_2}(\cosh2k_1-\cosh2k_2)\sinh2k_2, \quad m \rightarrow \pm \infty,
\end{align}	
\end{subequations}
and the $k_2$-soliton appears
\begin{align}
\label{xi1-theta1-1-u}
u\simeq-\dfrac{e^{2(\xi_2+\theta_2)}(\cosh2k_1-\cosh2k_2)\sinh2k_2}
{e^{2\xi_2}\sinh^2(k_1\pm k_2)+e^{2\theta_2}\sinh^2(k_1\mp k_2)}.
\end{align}
Thus the corresponding envelop reads
\begin{align}
\label{xi1-theta1-mo-1-u}
 |u|^2\simeq G_2/F^{\pm}_2, \quad m \rightarrow \pm \infty,
\end{align}
where $G_2=G_1|_{1\leftrightarrow 2}$ and
\begin{align}
\label{F2}
F^{\pm}_2=&e^{2a_2n}\rho^m_2(\cosh(a_1\pm a_2)-\cos(b_1\pm b_2))^2+e^{-2a_2n}\varsigma^m_2(\cosh(a_1\mp  a_2)-\cos(b_1\mp b_2))^2 \nn \\
&+2(\rho_2\varsigma_2)^\frac{m}{2}[\cos2(\mathcal{Q}_2-\mathcal{B}_2)\big((\cosh a_1\cos b_2-\cosh a_2\cos b_1)^2 \nn \\
&-(\sinh a_1\sin b_2-\sinh a_2\sin b_1)^2\big)-2\sin2(\mathcal{Q}_2-\mathcal{B}_2) \nn \\
&(\sinh a_1\sin b_2-\sinh a_2\sin b_1)(\cosh a_1\cos b_2-\cosh a_2\cos b_1)].
\end{align}	
The solution $|u|^2$ given by \eqref{ndNLS-fg-2ss} is illustrated in Figure 10.
\vskip35pt
\begin{center}
\begin{picture}(120,80)
\put(-160,-23){\resizebox{!}{4.0cm}{\includegraphics{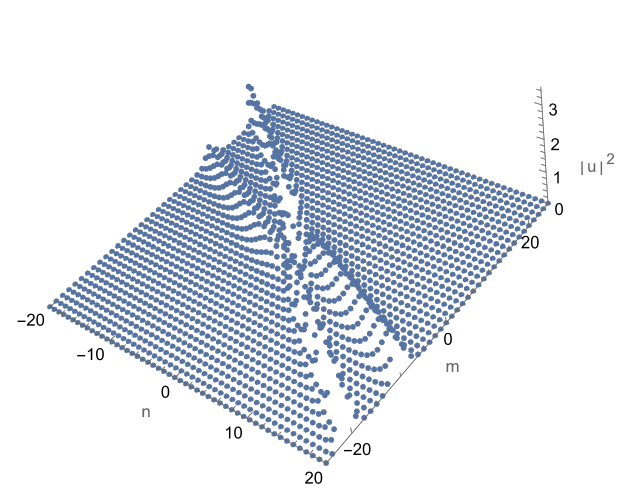}}}
\put(0,-23){\resizebox{!}{4.0cm}{\includegraphics{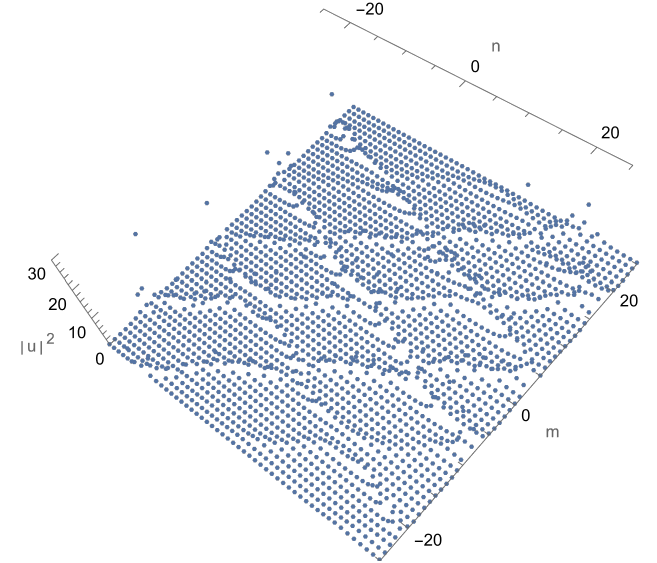}}}
\put(160,-23){\resizebox{!}{4.0cm}{\includegraphics{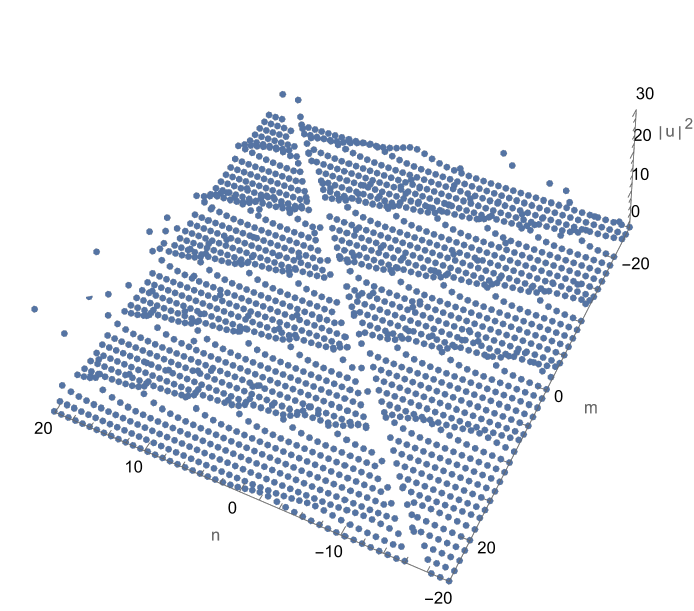}}}
\end{picture}
\end{center}
\vskip 20pt
\begin{center}
\begin{minipage}{15cm}{\footnotesize
\qquad\qquad\qquad(a)\qquad\qquad\qquad\qquad\qquad\qquad\qquad\qquad\quad(b)\qquad\qquad\qquad\qquad\qquad\qquad\qquad\quad (c) \\
{\bf Figure 10.}
Solution $|u|^2$ given by \eqref{ndNLS-fg-2ss}:
(a) Bell two-soliton solution with $k_1=0.4+0.01i$, $k_2=0.6+0.02i$ and $\delta=-1$; (b)
Oscillatory two-soliton solution with $k_1=0.01+0.44i$, $k_2=0.002+0.19i$ and $\delta=-0.4$;
(c) Bell-Oscillatory two-soliton solution with $k_1=0.005+0.2i$, $k_2=3+0.3i$ and $\delta=-1$.}
\end{minipage}
\end{center}

Furthermore, solution \eqref{ndNLS-fg-2ss} with $k_2=ik_1$ yields the breather solution, which is illustrated in Figure 11.
\vskip35pt
\begin{center}
\begin{picture}(120,80)
\put(-120,-23){\resizebox{!}{4.0cm}{\includegraphics{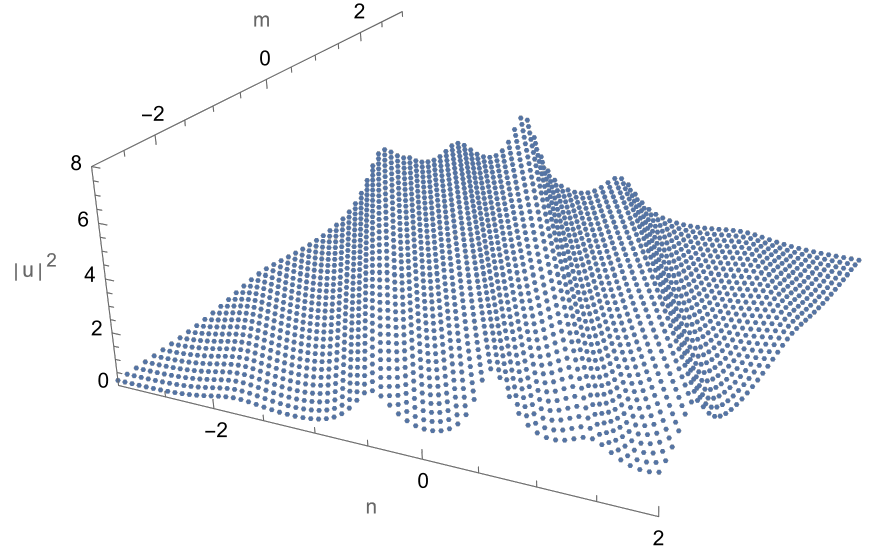}}}
\put(100,-23){\resizebox{!}{4.0cm}{\includegraphics{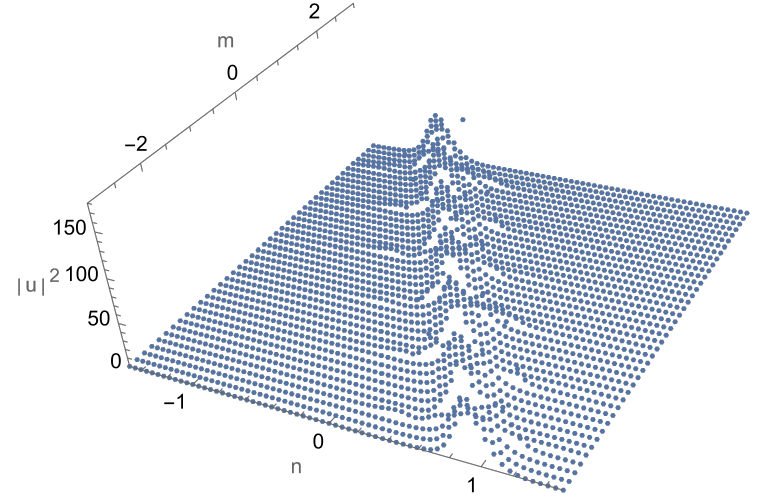}}}
\end{picture}
\end{center}
\vskip 20pt
\begin{center}
\begin{minipage}{15cm}{\footnotesize
\qquad\qquad\qquad\qquad\qquad\qquad\quad(a)\qquad\qquad\qquad\qquad\qquad\qquad\qquad\qquad\qquad\qquad (b) \\
{\bf Figure 11.} Breather solution $|u|^2$ given by \eqref{ndNLS-fg-2ss} with $\delta=-4$:
(a) shape and movement with $k_1=0.55+0.1i$ and $k_2=-0.1+0.55i$; (b) shape and movement with $k_1=1.45+0.1i$
and $k_2=-0.1+1.45i$.}
\end{minipage}
\end{center}

\subsubsection{Jordan-block solutions}

With the Jordan-block matrix $L$ \eqref{dNLS-L-Jor} and $C^{+}=\breve{I}$, solution $\Phi$ is composed by
\begin{align*}
\Phi_{j}=\left\{
\begin{aligned}
&\dfrac{\partial^{j-1}_{k_1}e^{\xi_1}}{(j-1)!}, \quad j=1, 2, \dots, N+1,\\
&\dfrac{\partial^{s-1}_{k_1}e^{\theta_1}}{(s-1)!}, \quad j=N+1+s,\quad s=1,2,\dots,N+1.
\end{aligned}	
\right.
\end{align*}
When $N=1$, the simplest Jordan-block solution is \eqref{dNLS-uw-fg} with
\begin{subequations}
\label{ndNLS-fg-JB-semi}
\begin{align}
f& =8(1+\cosh(2\xi_1-2\theta_1)-2\eta_1\varrho_1\sinh^22k_1),\\
g& =16\sinh2k_1\big((e^{2\xi_1}+e^{2\theta_1})\cosh2k_1+(\eta_1e^{2\theta_1}+\varrho_1e^{2\xi_1})\sinh2k_1\big),
\end{align}
\end{subequations}
where $\eta_{1}$ is defined by \eqref{dNLS-JBS-fg} and $\varrho_{1}=\partial_{k_1}\theta_{1}$.
This solution is depicted in Figure 12.
\vskip35pt
\begin{center}
\begin{picture}(120,80)
\put(-130,-23){\resizebox{!}{5.5cm}{\includegraphics{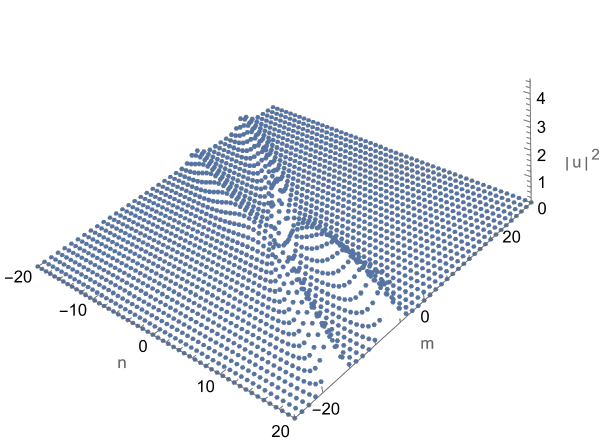}}}
\put(100,-23){\resizebox{!}{4.0cm}{\includegraphics{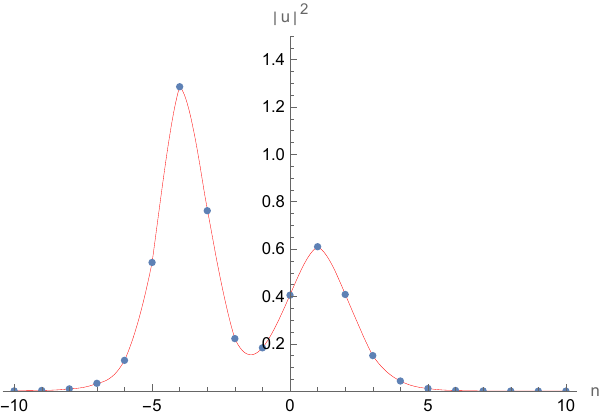}}}
\end{picture}
\end{center}
\vskip 20pt
\begin{center}
\begin{minipage}{15cm}{\footnotesize
\qquad\qquad\qquad\qquad\qquad\qquad\quad(a)\qquad\qquad\qquad\qquad\qquad\qquad\qquad\qquad\qquad\qquad\quad (b) \\
{\bf Figure 12.} Jordan-block solution $|u|^2$ given by \eqref{ndNLS-fg-JB-semi} with $k_1=0.4+0.015i$ and $\delta=-1$:
(a) shape and movement; (b) 2D-plot of (a) at $m=5$.}
	\end{minipage}
\end{center}

\subsection{Continuum limits}
\label{sec:ndNLS-cl}

Details of the two types of limit adopted in Subsection \ref{sec:dNLS-cl} can be also performed on the ndNLS equation \eqref{ndNLS}.

\subsubsection{Semi-continuum limits}

The semi-continuous limit
\begin{align}
\delta\rightarrow 0, \quad m\rightarrow \infty, \quad \text{while keeping} \quad z=m\delta \quad \text{finite},
\end{align}
enables the ndNLS equation \eqref{ndNLS} to yield the nonlocal semi-discrete NLS (nsdNLS) equation
\begin{align}
\label{nsdNLS-z}
 iu_z-2u+(1+uu_{-1})(\wt{u}+\dt{u})=0,
\end{align}
whose solutions can be summarized as follows.
\begin{Thm}
The function \eqref{sdNLS-u-fg} with
\begin{align}
\label{nsdNLS-z-solu}
f=|e^{-N\Ga}\wh{\Phi^{(N)}};e^{N\Ga}T\wh{\Phi_{-1}^{(-N)}}|,\quad
g=|e^{-N\Ga}\widehat{\Phi^{(N+1)}};e^{N\Ga}T\wh{\Phi_{-1}^{(-N+1)}}|,
\end{align}
solve the equation \eqref{nsdNLS-z}, where the $2(N+1)$-th order column vector $\Phi=e^{\Ga n+2iz\sinh^2\Ga }C^{+}$
and $T$ is a constant matrix of order $2(N+1)$ satisfying matrix equations
\begin{align}
\label{nsdNLS-z-Ga-T}
\Ga T- T\Ga=\bm 0,\quad T^{2}=|e^{\Ga}|^{2}I.
\end{align}
\end{Thm}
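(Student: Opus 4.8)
\medskip
\noindent\textbf{Proof strategy.}
The plan is to deduce this theorem from Theorem~\ref{Thm-ndNLS-solu} by carrying out exactly the semi-continuous limit used in Subsection~\ref{sec:dNLS-cl} to pass from the dNLS solutions \eqref{dNLS-uw-fg}--\eqref{dNLS-solu-1} to those of the semi-discrete equation \eqref{sdNLS-z}; the only genuinely new point is to verify that the nonlocal reduction data $(\Gamma,T)$ of \eqref{ndNLS-AT} and the reversal operation $\hbar\mapsto\hbar_{-1}$ behave well under the limit. Concretely, I would start from the ndNLS solution \eqref{dNLS-uw-fg}--\eqref{ndNLS-solu} with $\Phi$ given by \eqref{dAKNS2-Phi-Ga} and $T$ satisfying \eqref{ndNLS-AT}, and impose the scaling $\delta\to 0$, $m\to\infty$ with $z=m\delta$ held finite, as in Subsection~\ref{sec:ndNLS-cl}. (Alternatively the statement is essentially the nonlocal semi-discrete case already treated in \cite{DLZ-AMC}, cf. Remark~3, so one may also simply invoke that reference; I outline the limit argument for completeness.)

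First I would take the limit of the plane-wave matrix in \eqref{dAKNS2-Phi-Ga}. Rewriting the base of the $m/2$-th power as $I-(4\delta\sinh^{2}\Gamma)(2\delta e^{-\Gamma}\sinh\Gamma-iI)^{-1}$ in the spirit of \eqref{xj-def-modi} and applying the matrix version of \eqref{dc-re} (diagonalising $\Gamma$, or reducing to Jordan form), one gets
\[
e^{\Gamma n}\bigl[(2\delta e^{\Gamma}\sinh\Gamma-iI)(2\delta e^{-\Gamma}\sinh\Gamma-iI)^{-1}\bigr]^{\frac m2}C^{+}\longrightarrow e^{\Gamma n+2iz\sinh^{2}\Gamma}\,C^{+},
\]
that is $\Phi\to e^{\Gamma n+2iz\sinh^{2}\Gamma}C^{+}$, matching the scalar limit \eqref{lambdaj-def}; correspondingly $\Phi_{-1}\to e^{-\Gamma n-2iz\sinh^{2}\Gamma}C^{+}$, since under $z=m\delta$ the reversal $(n,m)\mapsto(-n,-m)$ becomes $(n,z)\mapsto(-n,-z)$. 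The Casorati determinants in \eqref{ndNLS-solu} then converge entry-by-entry to those in \eqref{nsdNLS-z-solu}: the $n$-shift operator $E$ entering $\wh{\Phi^{(N)}}$ is unaffected, only the $m$-shift $\widehat{\phantom a}$ disappears, and $w=\wh{\dt f}f/(\wh f\dt f)\to 1$. The matrix equations \eqref{ndNLS-AT}, namely $\Gamma T-T\Gamma=\bm0$ and $T^{2}=|e^{\Gamma}|^{2}I$, contain no $\delta$, hence pass verbatim to \eqref{nsdNLS-z-Ga-T}, and the block choice \eqref{ndNLS-Ga-T-ex} shows this system has nontrivial solutions of the required order.

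It then remains to check that the limiting functions solve \eqref{nsdNLS-z}. Substituting the Taylor expansions \eqref{Tay-exp-dsd} for $\wh u$ and $\wh{\dt u}$ into \eqref{ndNLS}, using $w\to 1$ and $m\delta=z$, exactly as in Subsection~\ref{Semi-CL} but with $uu_{-1}$ replacing $|u|^{2}$, reduces \eqref{ndNLS} to $iu_{z}-2u+(1+uu_{-1})(\wt u+\dt u)=0$, which is \eqref{nsdNLS-z}; since for each fixed $\delta$ the pair $(u,w)$ furnished by Theorem~\ref{Thm-ndNLS-solu} solves \eqref{ndNLS}, the limit $u=g/f$ with $f,g$ as in \eqref{nsdNLS-z-solu} solves \eqref{nsdNLS-z}. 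A direct verification is also available: bilinearise \eqref{nsdNLS-z} via $u=g/f$ as in \eqref{dAKNS2-uvw-tran} to obtain $i(g_{z}f-gf_{z})-2gf+(\wt g\dt f+\dt g\wt f)=0$ together with $\wt f\dt f-f^{2}=gh$, observe that $\Phi=e^{\Gamma n+2iz\sinh^{2}\Gamma}C^{+}$ satisfies $\wt\Phi=e^{\Gamma}\Phi$ and $\Phi_{z}=2i\sinh^{2}\Gamma\,\Phi$, and confirm the Casoratian identities as in \cite{DLZ-AMC}; the reduction $\Psi=T\Phi_{-1}$ with $\Gamma T=T\Gamma$ then yields $\wt\Psi=e^{-\Gamma}\Psi$, $\Psi_{z}=-2i\sinh^{2}\Gamma\,\Psi$, and $T^{2}=|e^{\Gamma}|^{2}I$ makes the reversal involutive so that $h$ is the correct reversed companion of $g$.

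I expect the main obstacle to be the bookkeeping of the reversal together with the rescaling $C^{\pm}\mapsto e^{\mp N\Gamma}C^{\pm}$ through the limit: one must ensure that the compatibility of $\Psi=T\Phi_{-1}$ with the semi-discrete conditions $\wt\Psi=e^{-\Gamma}\Psi$ and $\Psi_{z}=-2i\sinh^{2}\Gamma\,\Psi$ is preserved and that no Casoratian degenerates, both of which hinge precisely on the two algebraic constraints in \eqref{nsdNLS-z-Ga-T}; the analytic convergence in the second paragraph is routine once $\Gamma$ is taken in Jordan form.
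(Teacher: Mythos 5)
Your proposal matches the paper's own treatment: the theorem is obtained there exactly as you describe, by performing the semi-continuous limit ($\delta\rightarrow 0$, $m\rightarrow\infty$ with $z=m\delta$ fixed) on the ndNLS solutions of Theorem \ref{Thm-ndNLS-solu}, with the $\delta$-independent constraints \eqref{ndNLS-AT} passing unchanged to \eqref{nsdNLS-z-Ga-T} and with the semi-discrete double Casoratians backed by \cite{DLZ-AMC} (cf. Remark 3), so your argument is essentially the paper's. One cosmetic slip: the base of the $m/2$-th power should be written $I+4\delta\sinh^{2}\Gamma\,(2\delta e^{-\Gamma}\sinh\Gamma-iI)^{-1}$ (the minus sign belongs with the denominator in the form $iI-2\delta e^{-\Gamma}\sinh\Gamma$, as in \eqref{xj-def-modi}), which still gives the limit $e^{\Gamma n+2iz\sinh^{2}\Gamma}C^{+}$ you state.
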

Here $\Ga$ and $T$ are still take of form \eqref{ndNLS-Ga-T-ex}. Matrices $L$ \eqref{L-diag} and \eqref{dNLS-L-Jor} give rise
to soliton solutions and Jordan-block solutions, respectively.

\noindent {\it Soliton solutions:} With the diagonal matrix \eqref{L-diag}, the basic column vector $\Phi$ is
\begin{align}
\Phi_{j}=\left\{
\begin{aligned}
& e^{\lambda_j},\quad j=1, 2, \ldots, N+1,\\
& e^{\zeta_s},\quad j=N+1+s, \quad s=1,2,\ldots,N+1,
\end{aligned}	
\right.
\end{align}
where $\zeta_s=\lambda_j|_{k_j\longrightarrow-k_s}$.
When $N=0$, equation \eqref{nsdNLS-z} has one-soliton solution (see Eq. (4b) in \cite{DLZ-AMC})
\begin{align}
\label{nsdNLS-z-1ss}
u=e^{\lambda_1+\zeta_1}\sinh(2k_1)\sech(\lambda_1-\zeta_1),
\end{align}
whose modulus is
\begin{align}
\label{nsdNLS-z-1ss-mu}
|u|^2=e^{-4z\sinh a_1\sin b_1}\dfrac{\cosh2a_1-\cos2b_1}{\cosh2a_1n+\cos2b_1n}.
\end{align}
This carrier wave is stationary with top trace
\begin{align}
a_1\sinh2a_1n-b_1\sin2b_1n=0,
\end{align}
and exhibits both of bell and oscillatory structure.
When $a_1$ is not too small, the wave \eqref{nsdNLS-z-1ss-mu}
provides a bell soliton, while when $a_1$ is near to zero, this wave
shows the oscillatory structure due to the cosine in the denominator.
The bell and oscillatory waves are, respectively, depicted in Figures 13(b) and 13(c).
In addition, for any $n$, it is easy to know that
$|u|^2$ approaches to zero as either $(\sinh a_1\sin b_1>0, z\rightarrow +\infty)$
or $(\sinh a_1\sin b_1<0, z\rightarrow -\infty)$, as depicted in Figure 13(a).

\vskip20pt
\begin{center}
\begin{picture}(120,80)
\put(-160,-23){\resizebox{!}{3.5cm}{\includegraphics{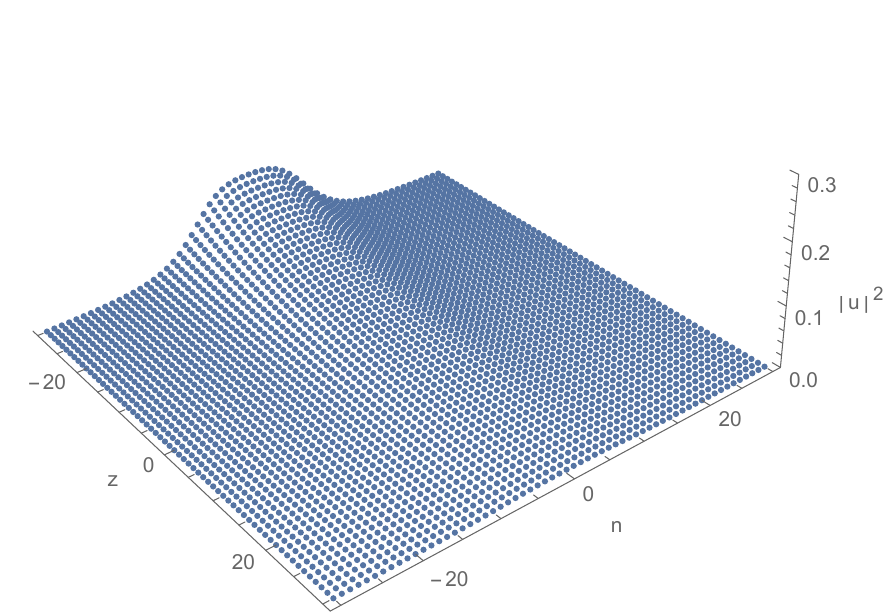}}}
\put(0,-23){\resizebox{!}{3.5cm}{\includegraphics{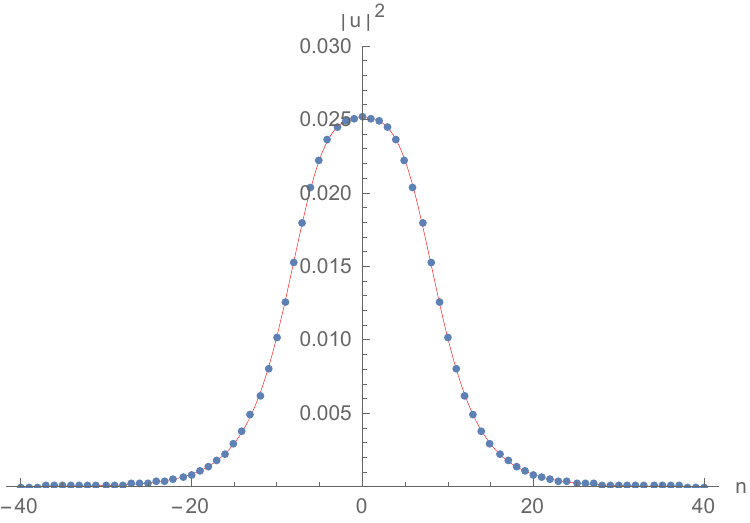}}}
\put(150,-23){\resizebox{!}{4.0cm}{\includegraphics{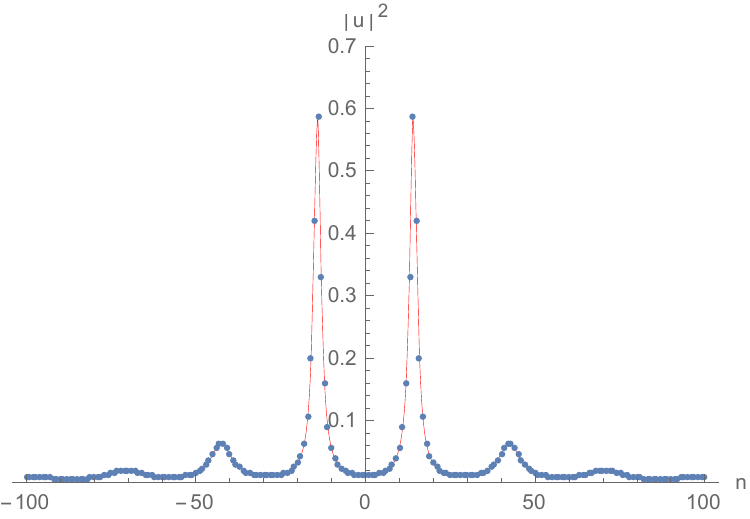}}}
\end{picture}
\end{center}
\vskip 20pt
\begin{center}
\begin{minipage}{15cm}{\footnotesize
\qquad\qquad\qquad(a)\qquad\qquad\qquad\qquad\qquad\qquad\qquad\qquad (b)\qquad\qquad\qquad\qquad\qquad\qquad\qquad\qquad\quad (c)\\
{\bf Figure 13}. One-soliton solution $|u|^2$ given by \eqref{nsdNLS-z-1ss-mu}:
(a) Bell soliton solution with $k_1=0.06+0.055i$;
(b) 2D-plot of (a) at $z=1$;
(c) Oscillatory soliton solution with $k_1=0.005+0.055i$ at $z=1$.}
\end{minipage}
\end{center}

When $N=1$, equation \eqref{nsdNLS-z} has two-soliton solution \eqref{sdNLS-u-fg},
where $f$ and $g$ are defined by \eqref{ndNLS-fg-2ss} with $\xi_j\rightarrow \lambda_j$ and $\theta_j\rightarrow\zeta_j~(j=1,2)$.
The two-soliton solution is shown in Figure 14.
\vskip0pt
\begin{center}
\begin{picture}(120,80)
\put(-160,-23){\resizebox{!}{3.5cm}{\includegraphics{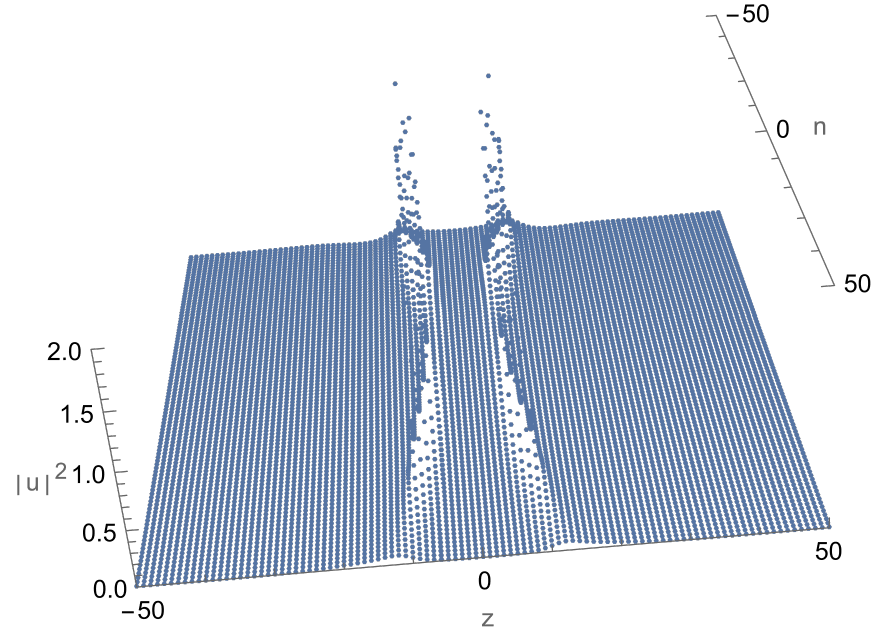}}}
\put(0,-23){\resizebox{!}{3.0cm}{\includegraphics{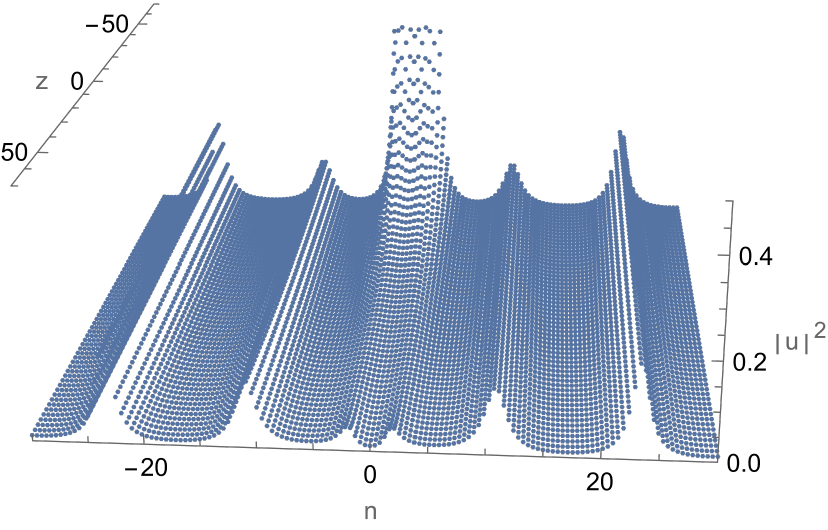}}}
\put(150,-23){\resizebox{!}{3.5cm}{\includegraphics{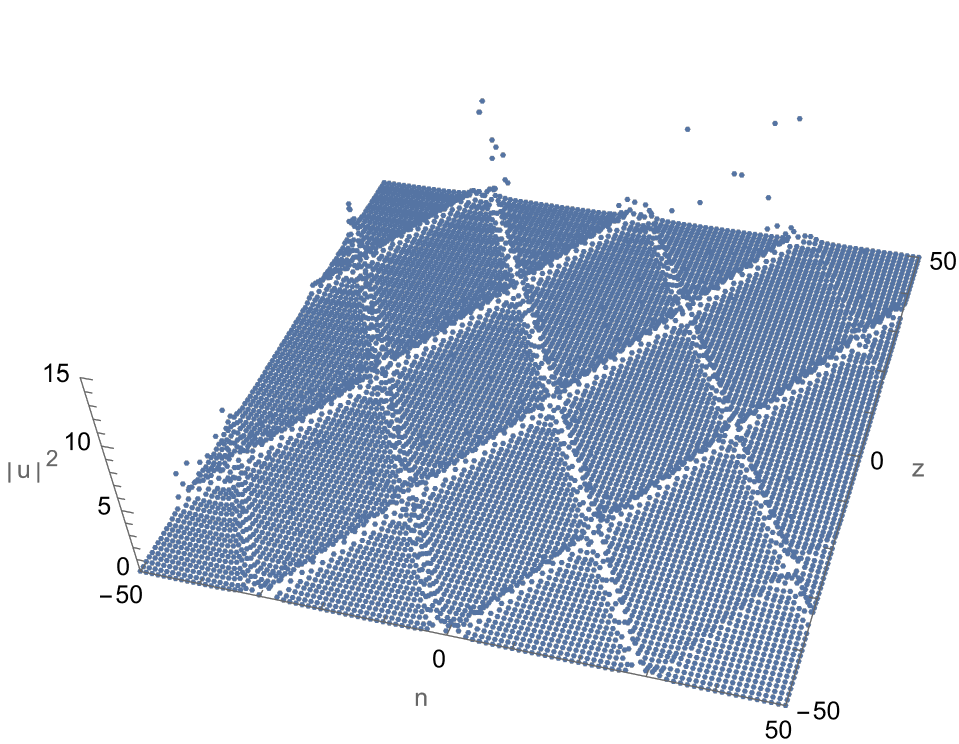}}}
\end{picture}
\end{center}
\vskip 20pt
\begin{center}
\begin{minipage}{15cm}{\footnotesize
\qquad\qquad\qquad(a)\qquad\qquad\qquad\qquad\qquad\qquad\qquad\quad (b)\qquad\qquad\qquad\qquad\qquad\qquad\qquad\qquad\quad (c)\\
{\bf Figure 14.} Two-soliton solution for the equation \eqref{nsdNLS-z}:
(a) Bell two-soliton solution with $k_1=0.01+0.07i$ and $k_2=0.02+0.06i$;
(b) Bell-Oscillatory two-soliton solution with $k_1=0.001+0.055i$ and $k_2=0.07+0.025i$;
(c) Oscillatory two-soliton solution with $k_1=0.001+0.56i$ and $k_2=0.002+0.5i$.}
\end{minipage}
\end{center}

\noindent {\it Jordan-block solutions:} Let $L$ be the Jordan-block matrix \eqref{dNLS-L-Jor},
the basic entries $\{\Phi_j\}$ are thereby of the form
\begin{align}
\label{nsdNLS-JB}
\Phi_{j}=\left\{
\begin{aligned}
&\dfrac{\partial^{j-1}_{k_1}e^{\lambda_1}}{(j-1)!},\quad j=1, 2, \ldots, N+1,\\
&\dfrac{\partial^{s-1}_{k_1}e^{\zeta_1}}{(s-1)!}, \quad j=N+1+s, \quad s=1,2,\ldots,N+1.
\end{aligned}	
\right.
\end{align}
In this case, the simplest Jordan-block solution is still given by \eqref{sdNLS-u-fg}
and \eqref{ndNLS-fg-JB-semi} up to replacements of $\xi_1$ by $\lambda_1$ and $\theta_1$ by $\zeta_1$.
The behavior of $|u|^2$ is depicted in Figure 15.
\vskip39pt
\begin{center}
\begin{picture}(120,80)
\put(-160,-23){\resizebox{!}{4.0cm}{\includegraphics{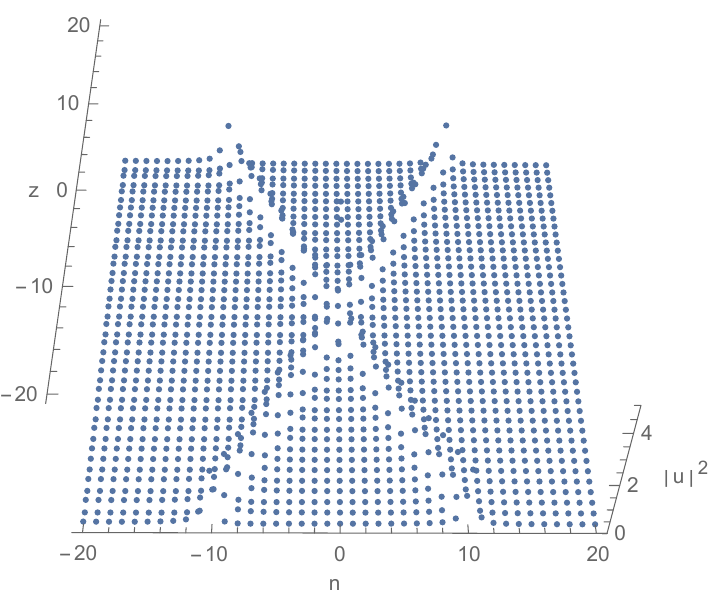}}}
\put(-20,-23){\resizebox{!}{4.0cm}{\includegraphics{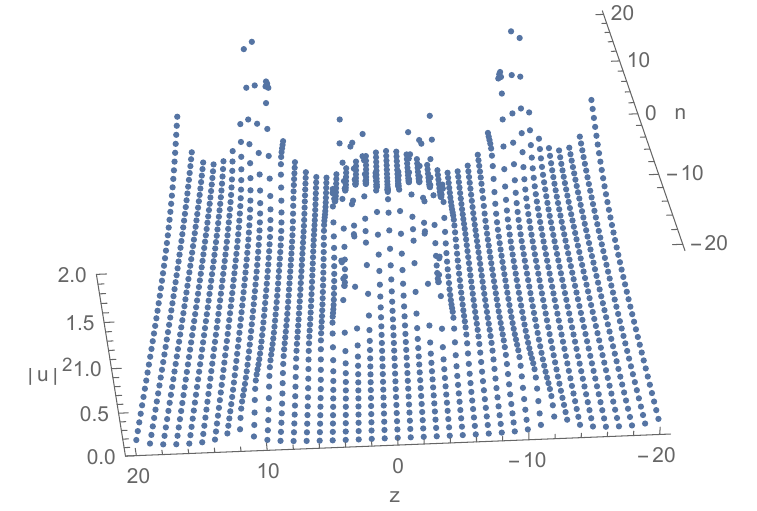}}}
\put(150,-23){\resizebox{!}{3.5cm}{\includegraphics{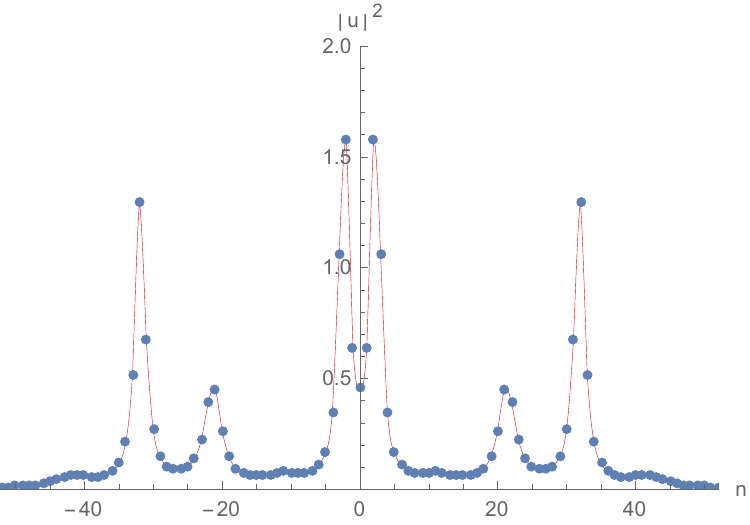}}}
\end{picture}
\end{center}
\vskip 20pt
\begin{center}
\begin{minipage}{15cm}{\footnotesize
\qquad\qquad\qquad(a)\qquad\qquad\qquad\qquad\qquad\qquad\qquad\quad (b)\qquad\qquad\qquad\qquad\qquad\qquad\qquad\quad\quad (c)\\
{\bf Figure 15.} Jordan-block solution $|u|^2$ determined by \eqref{nsdNLS-JB}:
(a) Bell Jordan-block solution with $k_1=0.04+1.4i$; (b)
 Bell-Oscillatory Jordan-block solution with $k_1=0.05+3i$; (c)
2D-plot of (b) at $z=1$.}
\end{minipage}
\end{center}

\subsubsection{Full-continuum limits}

Applying full-continuous limit \eqref{sdNLS-FCM-n}, from the nsdNLS equation \eqref{nsdNLS-z}
one can recover the nonlocal continuous NLS (ncNLS) equation
\begin{align}
\label{ncNLS}
iu_t+u_{xx}+u^2u_{-1}=0,
\end{align}
which has double Wronskian solution,
listed in the following theorem.
\begin{Thm}
The function \eqref{sdNLS-u-fg} with
\begin{align}
\label{ncNLS-solu}
f=|e^{-N\Ga}\wh{\Phi^{(N)}};e^{N\Ga}T\wh{\Phi^{*(-N)}}|,\quad
g=2\sqrt{2}|e^{-N\Ga}\widehat{\Phi^{(N+1)}};e^{N\Ga}T\wh{\Phi^{*(-N+1)}}|,
\end{align}
solves the equation \eqref{ncNLS}, where $\Phi=e^{\Ga x+2i\Ga^2t}C^{+}$
and $T$ is a constant matrix of order $2(N+1)$ satisfying matrix equations
\begin{align}
\label{ncNLS-Ga-T}
\Ga T-T\Ga=\bm 0,\quad T^{2}=|e^{\Ga}|^{2}I.
\end{align}
\end{Thm}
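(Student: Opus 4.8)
\noindent The plan is to obtain this statement as the full-continuous limit \eqref{sdNLS-FCM-n} of the double Casoratian solution \eqref{sdNLS-u-fg}--\eqref{nsdNLS-z-solu} of the nonlocal semi-discrete NLS equation \eqref{nsdNLS-z}, exactly paralleling the passage from \eqref{sdNLS-z-solu} to \eqref{NLS-solu}. First I would track the limit of the data: under $l_{j}=k_{j}/\epsilon$, $t=\epsilon^{2}z$, $n\to\infty$, $\epsilon\to0$ with $x=n\epsilon$ fixed, the eigenvalues $k_{j}$ of $\Ga$ rescale to $l_{j}$, and $\Ga n+2iz\sinh^{2}\Ga\to\Ga x+2i\Ga^{2}t$ since $\epsilon^{-2}\sinh^{2}(\epsilon l_{j})\to l_{j}^{2}$; hence $\Phi=e^{\Ga n+2iz\sinh^{2}\Ga}C^{+}$ and its reverse-space-time partner $\Phi_{-1}$ tend to $e^{\Ga x+2i\Ga^{2}t}C^{+}$ and its reverse. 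The constraint $\Ga T-T\Ga=\bm{0}$ is scale invariant and survives unchanged, while the normalisation in $T^{2}=|e^{\Ga}|^{2}I$ is the one carried along by the rescaling of $\Ga$. One must also check that \eqref{nsdNLS-z} itself degenerates to \eqref{ncNLS}: substituting $\wt u=u(x+\epsilon)=u+\epsilon u_{x}+\tfrac{\epsilon^{2}}{2}u_{xx}+\cdots$ and $\dt u=u(x-\epsilon)$ into \eqref{nsdNLS-z}, the term $-2u$ cancels the leading $2u$ produced by $(1+uu_{-1})(\wt u+\dt u)$, and after the reinterpretation $u(n,z)\to\epsilon u(x,t)/\sqrt2$ the surviving terms reproduce $iu_{t}+u_{xx}+u^{2}u_{-1}=0$.

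The core of the argument is the standard Casoratian-to-Wronskian degeneration. In the remaining discrete direction $E^{2}$ acts on $\Phi$ as $e^{2\epsilon\partial_{x}}=I+2\epsilon\partial_{x}+O(\epsilon^{2})$, so the finite-difference column reduction of $\wh{\Phi^{(N)}}=(\Phi,E^{2}\Phi,\dots,E^{2N}\Phi)$ replaces its $r$-th column by $(2\epsilon)^{r}\partial_{x}^{r}\Phi+O(\epsilon^{r+1})$, and the second block, built from $\Phi_{-1}$, behaves the same way with columns $(\pm2\epsilon)^{r}\partial_{x}^{r}\Phi_{-1}+O(\epsilon^{r+1})$; the constant prefactors $e^{\mp N\Ga}$ are carried along. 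Counting powers of $\epsilon$ block by block --- the two blocks of $f$ each contributing $(2\epsilon)^{N(N+1)/2}$, the two blocks of $g$ contributing $(2\epsilon)^{(N+1)(N+2)/2}$ and $(2\epsilon)^{N(N-1)/2}$ --- one obtains $f\sim(2\epsilon)^{N(N+1)}f^{\mathrm{W}}$ and $g\sim(2\epsilon)^{N^{2}+N+1}g^{\mathrm{W}}$ up to an inessential sign, where $f^{\mathrm{W}}$ is the determinant defining $f$ in \eqref{ncNLS-solu} and $g^{\mathrm{W}}$ the determinant defining $g$ there without the factor $2\sqrt2$, each read with the shift $E^{2j}$ replaced by the derivative $\partial_{x}^{j}$. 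Hence the semi-discrete $u=g/f$ behaves like $2\epsilon\,g^{\mathrm{W}}/f^{\mathrm{W}}$, and multiplying by $\sqrt2/\epsilon$, as dictated by $u(n,z)\to\epsilon u(x,t)/\sqrt2$, turns the residual factor $2\epsilon$ precisely into the $2\sqrt2$ in front of $g$. Since \eqref{sdNLS-u-fg}--\eqref{nsdNLS-z-solu} solves \eqref{nsdNLS-z}, the limiting pair $(f,g)$ then solves \eqref{ncNLS}.

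The step I expect to be the main obstacle is precisely this bookkeeping: one must verify that the leading coefficients of the finite-difference reduction are nonvanishing (so no further cancellation enters), that the $O(\epsilon^{r+1})$ remainders remain uniformly subdominant on compact $(x,t)$-sets, and that the $\epsilon$-exponents contributed by the two diagonal blocks combine to exactly $N(N+1)$ in $f$ and $N^{2}+N+1$ in $g$, together with tracking the $|e^{\Ga}|^{2}$-normalisation --- it is this combined count that pins down the constant $2\sqrt2$. A self-contained alternative, bypassing the limit, is a direct double-Wronskian verification: bilinearise \eqref{ncNLS} by $u=g/f$ (with the reverse-space-time reduction identifying $h$ with $g_{-1}$ and $f$ with a multiple of $f_{-1}$) into the continuous AKNS bilinear triple $(iD_{t}+D_{x}^{2})\,g\cdot f=0$, $(-iD_{t}+D_{x}^{2})\,h\cdot f=0$, $D_{x}^{2}\,f\cdot f=\mp2gh$, substitute the ansatz \eqref{ncNLS-solu}, and reduce the resulting identities to Pl\"ucker/Jacobi determinant relations using $\Phi_{x}=\Ga\Phi$, $\Phi_{t}=2i\Ga^{2}\Phi$ together with $\Psi=T\Phi_{-1}$; the conditions $\Ga T-T\Ga=\bm{0}$ and $T^{2}=|e^{\Ga}|^{2}I$ then emerge, respectively, as the requirements that $\Psi$ satisfy the same linear system and that $u_{-1}=g_{-1}/f_{-1}$ be consistent with $v=h/f$.
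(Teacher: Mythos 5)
Your proposal follows essentially the same route as the paper: the theorem there is obtained purely by applying the full-continuous limit \eqref{sdNLS-FCM-n} to the nonlocal semi-discrete solutions \eqref{sdNLS-u-fg}, \eqref{nsdNLS-z-solu}, exactly as you do, and the paper in fact states the result without recording any of the bookkeeping. Your $\epsilon$-power count for the Casoratian-to-Wronskian degeneration ($f\sim(2\epsilon)^{N(N+1)}f^{\mathrm{W}}$, $g\sim(2\epsilon)^{N^{2}+N+1}g^{\mathrm{W}}$, so that the rescaling $u(n,z)\rightarrow\epsilon u(x,t)/\sqrt{2}$ produces the factor $2\sqrt{2}$) is correct and supplies detail the paper omits.
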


\section{Conclusions}

In this paper we have constructed double Casoratian solutions for the discrete NLS
type equations \eqref{dNLS} and \eqref{ndNLS} by using the bilinearization reduction method.
These two equations are local and nonlocal reductions of the dAKNS(2) equation \eqref{dAKNS2}, respectively.
Starting from the double Casoratian solutions to the dAKNS(2) equation \eqref{dAKNS2}, one-, two-soliton
solutions and the simplest Jordan-block solutions of the resulting equations are determined by
solving the matrix equations. Dynamics of one-soliton and two-soliton solutions
of variable $u$ are given by asymptotic analysis and illustration as a demonstration. By introducing the
semi-continuous limit and the full continuum limit, we recover the discrete NLS
type equations \eqref{dNLS} and \eqref{ndNLS} to their semi-discrete counterparts and
continuous correspondences. The one-, two-soliton
solutions and the simplest Jordan-block solutions for the semi-discrete NLS equations are
discussed. Different from the local case, solutions for nonlocal NLS equations always
exhibit oscillatory structure since the involvement of trigonometric functions.

We finish the paper by the following remarks.

First of all, inserting the semi-continuous limit introduced in Subsubsection \ref{Semi-CL} in the
dAKNS(2) equation \eqref{dAKNS2}, one arrives at the famous Ablowitz-Ladik equation
\begin{subequations}
\label{AL}
\begin{align}
& iu_z-2u+(1+uv)(\wt{u}+\dt{u})=0, \\
& iv_z+2v-(1+uv)(\wt{v}+\dt{v})=0.
\end{align}
\end{subequations}
For the system \eqref{AL} itself, it admits four reductions \cite{DLZ-AMC}
\begin{align}
v=\omega u^*, \quad v=\omega u_{-n}^*, \quad v=\omega u_{-z}, \quad v=\omega u_{-n,-z},
\quad \text{with} \quad \omega=\pm 1,
\end{align}
which correspond to complex local reduction, complex reverse space reduction,
real reverse time reduction, and real reverse space-time reduction, respectively.
While for the dAKNS(2) equation \eqref{dAKNS2}, it just has the
complex local reduction $(v=\omega u^*, w=\omega w^*)$ and the real reverse space-time reduction
$(v=\omega u_{-n,-m},~w=\omega \wh{w}_{-n,-m})$. In the present paper, we take $\omega=1$ without loss of generality.
In fact, the reduced equation with $\omega=1$ and with $\omega=-1$ can be transformed into each other by taking
simple transformations \cite{XZS-SAPM,ZXS-MMAS}.

What is more, for the obtained soliton solutions to the ndNLS equation \eqref{ndNLS}, they can be
bell or oscillatory shape by controlling the values of real parts of $k_i$ (see figures 9, 10).
When $a_i$ are near to zero, oscillatory phenomenon would be arisen since the dominated role of
trigonometric function. While when $a_i$ are not too small, the effect brought by the trigonometric function can be neglected and
the soliton appears as the usual bell form. Therefore, for the two-soliton solution \eqref{ndNLS-fg-2ss},
one can take a small $a_1$ and not small $a_2$ to show the interaction of bell soliton and
oscillatory soliton (see figures 10(c)).

In addition, by discretizing the bilinear form of an extended coupled NLS system,
a generalized discrete NLS system was proposed in \cite{Hirota}, where two-soliton solutions
for the $2$-coupled discrete system were presented. A natural question is how can we
construct the double Casoratian solutions for the general $2M$-coupled discrete NLS system?
Whether the $2M$-coupled discrete NLS system admits local and nonlocal reductions and how can we
construct solutions for the reduced local and nonlocal $M$-coupled NLS system? These are the intriguiging
questions deserved further consideration, which will done in near future.

\vskip 20pt
\section*{Acknowledgments}
This project is supported by the Zhejiang Provincial Natural Science Foundation (No. LZ24A010007)
and National Natural Science Foundation of China (No. 12071432).

\vskip 20pt
\section*{Author Contributions}
S.L.: methodology,  formal analysis, editing, validation,
supervision; X.H.: writing original draft preparation, investigation, formal analysis, validation;
W.: investigation, editing, validation. All authors reviewed the manuscript.

\vskip 20pt
\section*{Data Availibility Statement}
Data sharing not applicable to this article as no datasets were generated or
analyzed during the current study.

\vskip 20pt
\section*{Conflict of interest}
There are no conflicts of interest to declare.

{\small
}
\end{document}